\documentclass[letterpaper,10pt,journal,twoside]{IEEEtran}
\usepackage{cite}
\usepackage{amsmath,amssymb,amsfonts,mathtools}
\usepackage{algorithmic}
\usepackage{graphicx}
\usepackage{algorithm,algorithmic}
\usepackage{hyperref}
\hypersetup{hidelinks=true}
\usepackage{textcomp}
\usepackage[tight]{subfigure}
\usepackage{arydshln}
\usepackage[varvw,vvarbb]{newtxmath}
 \usepackage{mathrsfs}
\DeclareMathOperator\diag{diag}
\DeclareMathOperator\im{Im}

\DeclareMathOperator\spec{spec}
\DeclareMathOperator\rnk{rank}

\DeclareMathOperator\nrnk{n\hspace{.5pt}rank}
 \newcommand\wbar\bar
\newcommand\abs[1]{\ensuremath{\lvert#1\rvert}}

\newcommand\norm[1]{\ensuremath{\lVert#1\rVert}}
\newcommand\onesn{\ensuremath{\mathbb1_{\nu}}}

\newcommand\E{\mathrm e} % e
\newcommand\J{\mathrm j} 
\newcommand\ones{\ensuremath{\mathbb1}} % all ones
\newcommand\Hinf{\ensuremath{H_{\infty}}}

\newcommand\RHP{\ensuremath{\mathbb{C}_0}}
\newcommand\CRHP{\ensuremath{\wbar{\mathbb{C}}_0}}

\newcommand{\Gr}{\mathcal{G}} % graph
\newcommand\Adj{A_{\Gr}} % Adjacency matrix
\newcommand\Deg{D_{\Gr}} % Degree matrix
\newcommand\Lap{L_{\Gr}} % Laplacian matrix
\newcommand\Mtil{\ensuremath{\tilde{M}}}
\newcommand\Ntil{\ensuremath{\tilde{N}}} % LCF \Ntil^{-1}\Ntil
\newcommand\Xtil{\ensuremath{\tilde{X}}}
\newcommand\Ytil{\ensuremath{\tilde{Y}}} % \Mtil \Xtil+\Ntil \Ytil =I
\newcommand\mmatrix[2][cccccc]{\ensuremath{\left[\begin{array}{#1}#2\end{array}\right]}}
\makeatletter
\xdef\@endgadget#1{{\unskip\nobreak\hfil\penalty50\hskip1em\hbox{}\nobreak\hfil#1\parfillskip=0pt\finalhyphendemerits=0\par}}
\newcommand\@Endofsymbol{$\triangledown$}
\newcommand\Endofremark{\@endgadget{\@Endofsymbol}}
\makeatother
\newtheorem{theorem}{Theorem}[section]
\newtheorem{lemma}[theorem]{Lemma}
\newtheorem{corollary}[theorem]{Corollary}

\newtheorem{remark}{Remark}[section]
\newtheorem{example}{Example}[section]
\newtheorem{definition}{Definition}[section]
\begin{document}
%
%\title{On the Fragility and Robustness of Synchronization Trajectories to Network Perturbations}
\title{How network perturbations distort agreement trajectories in LTI multi-agent systems}
%\title{On the robustness of synchronized trajectories to network perturbations: a frequency domain approach}
%
\author{Gal Barkai$^1$, \and Irinel-Constantin Mor\u{a}rescu$^1$ 
\thanks{$^1$Universit\'e de Lorraine, CNRS, CRAN, F-54000 Nancy, France, Emails: \textsl{\{gal.barkai,constantin.morarescu\}@univ-lorraine.fr}.}
}
%\thanks{*This work has been supported in part by ANR under grant COMMITS ANR-23-CE25-0005.} 
\maketitle

\begin{abstract}
Distributed coordination of multi-agent systems frequently relies on cooperative protocols designed to achieve agreement on a prescribed, non-trivial trajectory. While the robustness of such protocols to various uncertainties is well documented, existing literature universally assumes that the target agreement trajectory itself remains invariant. This assumption may hold in ideal cases, but we prove that network perturbations can vastly modify the asymptotic agreement trajectory. We first investigate the exact trajectories of Linear Time-Invariant (LTI) agents subjected to dynamic coupling uncertainties by establishing a new Laplace-domain criterion that characterizes the specific closed-loop poles governing the perturbed agreement manifold. To formalize our analysis, we introduce the notion of \emph{structure-preserving} dynamics, perturbations that maintain the null space of the communication graph's Laplacian, and contrast them with transmission only dynamics, affecting only the adjacency matrix. We prove a critical fragility within standard cooperative output regulation schemes: while static consensus is uniquely robust to heterogeneous transmission delays, synchronization to periodic trajectories is destroyed by arbitrarily small transmission delays. Furthermore, we demonstrate that for $d$-regular topologies, uniform transmission perturbations can easily shift the system to synchronize with an unexpected, entirely new frequency. These findings expose a previously unidentified vulnerability in classical robust synchronization, demonstrating that transmission dynamics necessitate fundamental structural modifications to networked reference generators.
\end{abstract}
\begin{IEEEkeywords}
% Enter key words or phrases in alphabetical order, separated by commas. Using the IEEE Thesaurus can help you find the best standardized keywords to fit your article. Use the thesaurus access request form for free access to the IEEE Thesaurus: \underline{https://www.ieee.org/publications/services/thesaurus-acce}\\
% \underline{ss-page.com.}
Consensus, Network analysis and control,Linear systems, Uncertain systems.
 \end{IEEEkeywords}

\section{Introduction}
\label{sec:intro}
\IEEEPARstart{O}{ver} the past two decades, the distributed control of multi-agent systems (MAS) has emerged as a central theme in the control literature \cite{O-SM:04}. The harbinger of this trend, and still the most studied distributed control problem, is the consensus problem, which seeks to design a distributed control law that asymptotically drives the states of multiple integrator agents to a common constant value. However, consensus is merely a special case of the general agreement or synchronization problem, where the objective is to drive the states or outputs of the agents to a common, time-varying trajectory \cite{SS:09}. For homogeneous agents, agreement is typically achieved by imposing structured coupling based on a matrix representation of the communication topology, effectively reducing the system to a low-dimensional robust stabilization problem \cite{FM:04,LDCH:10,BenRejeb2018GuaranteedCost}. A seminal generalization of these concepts to heterogeneous agents relies on the internal model principle \cite{FrW:76}, establishing that it is necessary and sufficient for all agents to realize a common internal model to synchronize \cite{WSA:11}. This cooperative output regulation approach prescribes the eventual agreement trajectory \emph{a priori} during the control design stage.

Given the networked nature of MAS, significant research has focused on ensuring synchronization under various network dynamics, often modeling communication constraints and imperfections. These can be largely divided into three categories: i) delays (actuation \cite{O-SM:04,MMN:09}, transmission \cite{M:05,SDJ:08}, and internal \cite{QGY:19}), ii)  uncertain edge weights \cite{ZB:17}, and iii) general finite-dimension uncertainties (additive \cite{TTM:13} or on the edges \cite{LC:17}). Transmission perturbations are of critical interest because they are unavoidable when agents exchange data over a network, a fundamental requirement in cooperative output regulation protocols where agents share the internal states of a common reference generator \cite{WSA:11,IMC:14}. Early consensus results indicated a surprising robustness to such delays; for example, consensus in systems of first-order agents is reached even in the presence of arbitrary transmission delays \cite{M:05}, which is not the case for actuation delays \cite{O-SFM:07}. Similarly, consensus of general Linear Time-Invariant (LTI) agents is much more robust to heterogeneous transmission delays then to heterogeneous actuation delays \cite{MPA:10}. Notably, the more general treatments of robust synchronization typically do not consider transmission-only perturbations at all, and preclude any perturbation including a delay element due to the finite-dimensionality assumption.

Despite these advances, existing literature regarding the analysis of protocols subject to network perturbations overwhelmingly focuses on whether \emph{some} form of agreement is reached, largely ignoring the impact of the perturbation on the asymptotic trajectory itself. In fact, the vast majority of existing works that do consider trajectories restrict their attention only a limited set of target trajectories, such as static or velocity consensus. This can be attributed, in part, to technical difficulties in determining the exact trajectory of a perturbed system. For example, delayed networks do not admit finite-dimensional state-space representations, thus analyzing their steady-state behavior in the time-domain is extremely difficult even for first order agents \cite{SDJ:08,MMN:09}. More general setups often resort to analysis in the Laplace domain, where the main tool to analyze asymptotic behavior is the final value theorem, whose use is limited to poles at the origin \cite{MPA:10,Lee:16}. 

In this paper, we consider a standard LTI MAS designed to reach agreement on some prescribed nominal trajectory. We rigorously define and analyze the robustness of this trajectory to network perturbations, determining whether the nominal trajectory remains attainable despite possibly infinite-dimensional, LTI network uncertainties. To this end, we construct a frequency-domain framework to analyze all possible agreement trajectories based on transfer function analysis. We leverage classical notions such as the existence of left and right coprime factors to obtain a novel characterization of all unstable poles capable of contributing to the asymptotic trajectory. We then show that for simple diffusively coupled networks, like those in all previously mentioned works, this condition simplifies even further to an intuitive structural condition on the graph Laplacian. 

Using this structure we can define a distinct class that we call \emph{structure-preserving perturbations}, for which the original synchronization trajectory does not change regardless of the specific perturbation. Remarkably, the common uncertainty structures considered in \cite{TTM:13,LC:17} are shown to be structure-preserving. Conversely, we show that transmission perturbations are \emph{not} structure-preserving, and that the remarkable robustness of consensus to transmission delays is a by-product of the consensus objective and properties of the delays. We prove that for transmission delays, the static consensus trajectory is the \emph{only} robust trajectory independently of the delay value. For any periodic trajectory, we establish a \emph{minimal} transmission delay, proving that arbitrarily small delays destroy synchronization to the prescribed nominal trajectory. Consequently, reference generators relying on networked internal models \cite{WSA:11} must fundamentally alter their protocols to account for inherent communication latency. Moreover, we prove that for a uniform transmission perturbation the nominal trajectory is achievable only if the perturbation satisfies strict gain and phase conditions at every frequency present in the nominal trajectory. Finally, we demonstrate that for $d$-regular topologies, transmission perturbations do not merely destroy synchronization but can shift the agreement trajectory to unexpected, entirely different trajectories.

The remainder of this paper is organized as follows. Section \ref{sec:agrDef} presents the over all problem setup, rigorously defines agreement and agreement trajectory, and presents key technical results. Section \ref{sec:diffusive} further simplifies the previous results to the common case of homogeneous diffusively coupled systems. Section \ref{sec:frag} introduces the notions of fragility and robustness of trajectories, and proceed to analyze the possible agreement trajectories for two classes of Laplacian dynamics: structure-preserving and transmission only dynamics. A discussion on the results from the perspective of both classical and cooperative output regulation theory is presented in \S~\ref{sec:main:disc}. Extensive numerical examples illustrating the various results are given in Section \ref{sec:ex}, while some concluding remarks are given in Section \ref{sec:conc}. Several definitions and technical results about coprime factorizations over $\Hinf$ are provided in the appendix.
%%%%%%%%%%
%\subsubsection*{Notations and preliminaries}
\subsection{Notations and preliminaries}
%%%%%%%%%%
The sets of integers, real and complex numbers, and the open and closed complex right half-plane are denoted by $\mathbb{Z}$, $\mathbb{R}$, $\mathbb{C}$, $\RHP$ and $\CRHP$ respectively. Given a set $\mathcal S$, its cardinality is denoted as $\abs{\mathcal S}$. By $e_i$ we understand the $i$th standard basis vector in $\mathbb R^\nu$ and by $\onesn$, or simply $\ones$ when the dimension is clear from the context, the all-ones vector from $\mathbb R^\nu$. The complex-conjugate transpose of a matrix $M$ is denoted by $M'$. The notation $\diag\{M_i\}$ stands for a block-diagonal matrix with diagonal elements $M_i$. The image (range) and kernel (null) spaces of a matrix $M$ are notated $\im M$ and $\ker M$, respectively. Given two matrices (vectors) $M$ and $N$, $M\otimes N$ denotes their Kronecker product, while $\spec M$ refers to the set of eigenvalues of $M$. 

We denote a simple directed graph with no self loops by $\Gr$, and its adjacency, incidence, (in) degree and (in) Laplacian matrices by $\Adj$, $E$, $\Deg$ and $\Lap$ respectively. The graph Laplacian is defined as $\Lap=\Deg-\Adj$, and can be equivalently written as $\Lap=EE'$ if the graph is undirected and as $\Lap=BE'$ for some binary matrix $B$ if it is directed. For any digraph the Laplacian satisfies $\Lap \ones =0$ implying that $E'\ones=0$ and that $\Adj\ones=\Deg\ones$, moreover if the graph is connected then $\ker \Lap=\im \ones$ and $0$ is a simple eigenvalue.

Given a complex function $f(s)$, we say that a complex number $p$ is a pole of order $n$ if and only if $1/f(p)=0$ and $(s-p)^n f(s)$ is holomorphic and non-zero in a neighborhood of $p$. A pole of order $1$ is called a simple pole. The \emph{normal rank} of a transfer matrix $G(s)$ is defined as $\nrnk(G(s))\coloneqq \max _{s\in \mathbb{C}} \rnk G(s)$.  We denote by $\Hinf^{p\times m}$ the space of $p\times m$ transfer functions which are analytic and bounded in the supremum norm in the open right half-plane. A function $G\in \Hinf$ can be extended for almost all $\omega$ to some function $\tilde{G}$ such that 
    \[
    \tilde{G}(s)= \begin{cases}
        G(s), &s\in \mathbb{C}_0\\
        \lim_{\sigma \to 0^+}G(\sigma+\J \omega), & s=\J\omega
    \end{cases}
    \]
without loss of generality. To simplify the notation we shall use $G(s)$ even when we refer to $\tilde{G}(s)$.

%%%%%%%%%%%%%%%%%%%%%%%%%%%%
\section{Agreement in time and frequency domains} \label{sec:agrDef}
%%%%%%%%%%%%%%%%%%%%%%%%%%%%
Consider a group of $\nu$ linear time-invariant (LTI) agents described by
\begin{equation} \label{eq:hetdyn_i}
    \begin{cases}
        \dot{x}_i=A_ix_i+B_iu_i & x_i(0)=x_{i,0}\\
        y_i = C_ix_i
    \end{cases} \quad i\in [1\ldots,\nu]
\end{equation}
where $x_i\in \mathbb{R}^{n_i}$, $u_i\in \mathbb{R}^{m_i}$, and $y_i\in \mathbb{R}^p$ are the $i$th state, control, and output signals respectively. Note that we allow for $n_i\neq n_j$ and $m_i\neq m_j$, but assume that all outputs have the same dimension. A network is homogeneous if the matrices in \eqref{eq:hetdyn_i} are identical for all $i\in [1,\ldots,\nu]$. The entire group can be described by
\begin{equation} \label{eq:hetdyn_agg}
    % \begin{cases}
    %     \dot{x}=\diag\{A_i\}x+\diag\{B_i\}u & x(0)=x_{0}\\
    %     y = \diag\{C_i\}x
    % \end{cases} 
    \begin{cases}
        \dot{x}=\hat{A}x+\hat{B}u & x(0)=x_{0}\\
        y = \hat{C}x
    \end{cases} 
\end{equation}
where $x$, $u$, and $y$ are the aggregation of their local counterparts and $\hat{M}\coloneqq \diag\{M_i\}$. To preserve the generality of the discussion we do not enforce a particular structure on $u$, but only assume it can be described as a convolution with some causal and LTI kernel $K(t,\tau)$ in the following fashion
\begin{equation}\label{eq:kernel_u}
    u(t) = \int_0^t K(t-\tau)y(\tau)d\tau.
\end{equation}
By defining the control action via \eqref{eq:kernel_u} we allow for a large class of controllers, namely any causal and LTI system which depend only on the aggregate output $y$. In particular, the controllers do not have to admit a state-space representation. If the kernel matrix in \eqref{eq:kernel_u} is defined by the graph Laplacian $\Lap$ of a static digraph $\Gr$, then the system is called \emph{diffusively coupled}. Note that the matrix structure of $K$ can encode the underlying communication topology in various other ways as well. For this general framework, we are now ready to define precisely what is "agreement".
\begin{definition}[Agreement in the time domain] \label{def:timeAg}
We say that agents \eqref{eq:hetdyn_agg} controlled by \eqref{eq:kernel_u} achieve \emph{output agreement} if for all initial conditions
\begin{subequations}\label{eq:timeagreement}
\begin{equation}\label{eq:timeagreement:out}
    \lim_{t\to \infty} \norm{y_i-y_j}=0 \quad \forall i,j,
\end{equation}
and there exists a square matrix $A_0$ with $\spec A_0 \in \J\mathbb{R}$, a matrix $R$ and a vector $q_0$ such that
    \begin{equation}\label{eq:timeagreement:sync}
        \lim_{t\to \infty} \norm{y_i-R\E^{A_0t}q_0}=0 \quad \forall i.
    \end{equation}
\end{subequations}
We then say that the \emph{agreement trajectory} is $q(t)\coloneqq R\E^{A_0t}q_0$, and that an eigenvalue $\lambda$ can \emph{contribute to the trajectory} if $\lambda \in \spec A_0$.
\end{definition}
Definition \ref{def:timeAg} mirrors standard definitions of state and output synchronization, but is slightly more general in two ways. First, it does not specify the particular communication protocol and control structure. Second, it explicitly defines agreement to a \emph{trajectory}. This is particularly important since in practice the specific trajectory is a design constraint, meaning that $A_0$ is often prescribed a-priori. Defining agreement using only \eqref{eq:timeagreement:out} can inadvertently allow for undesirable behavior, such as converging to the origin or to an unbounded trajectory. Moreover, in some robotic application we would like to synchronize on a specific "safe" trajectory. For standard Laplacian-based controllers the agreement trajectory is determined by the common internal model of all the agents \cite{WSA:11}. This model is embedded into the controllers as a homogeneous agreeing system \cite[eq. 10]{WSA:11} communicating via a Laplacian flow. Thus even when analyzing heterogeneous agents, whether or not the agents reach agreement to a specific trajectory is determined by a simple homogeneous system. 

A natural follow-up question is what happens if there is some unmodeled uncertainty within the system. Indeed, much work has been dedicated to robust synchronization, where usually the uncertainty is modeled as additive, finite-dimensional and stable \cite{TTM:13,LC:17} or as various time-delays \cite{O-SM:04,M:05,SDJ:08,MMN:09,QGY:19}. The aforementioned  works, however, consider agreement only in the sense of \eqref{eq:timeagreement:out}, implicitly allowing for $A_0$ and $R$ from \eqref{eq:timeagreement:sync} to change. In simple words, they do not analyze whether the nominal agreement trajectory can be preserved under the uncertainty. Determining the exact trajectory of the perturbed system is particularly difficult for delayed systems since these do not admit a finite-dimensional state-space representation. In contrast, constant time delays admit a simple and well-behaved transfer function. This is the main motivation for the rest of the section, where we try to adapt Definition \ref{def:timeAg} to the frequency domain. As can be seen in the sequel, this will facilitate the analysis of possible agreement trajectories.

%%%%%%%%%%%%%%%%%%%%%%%%%%%%
\subsection{Frequency domain characterization of agreement}
%%%%%%%%%%%%%%%%%%%%%%%%%%%%
%
\begin{definition}[Agreement in the frequency domain] \label{def:freqAg}
Consider a system whose initial condition response can be described in the frequency domain as
\begin{equation}\label{eq:init_laplace}
    y(s)=G(s)x_0
\end{equation}
for some not necessarily rational transfer function $G(s)$. We say that system \eqref{eq:init_laplace} reaches output agreement if for all initial conditions there exists a square matrix $A_0$ with $\spec A_0 \in \J\mathbb{R}$, a matrix $R$ and a vector $q_0$ such that
\begin{equation}\label{eq:freqagreement}
       G(s)x_0=\ones\otimes \left(R(sI-A_0)^{-1}q_0\right)+Q(s)
\end{equation}
with $Q\in \Hinf$. Given a complex point $\lambda$, we say that $\lambda$ \emph{can contribute to the trajectory} if it is a pole of $(sI-A_0)^{-1}$.
\end{definition}
%%%%
Note that both Definitions \ref{def:timeAg} and \ref{def:freqAg} coincide if we identify $G$ from \eqref{eq:init_laplace} with

\[
% G(s)=\diag{C_i}\left(sI-\diag\{A_i\}-\diag\{B_i\}K(s)\diag{C_i}\right)^{-1},
    G(s)=\hat{C}\left(sI_n-\hat{A}-\hat{B}K(s)\hat{C}\right)^{-1}
\]
where $K(s)$ is Laplace transform of the kernel from \eqref{eq:kernel_u} and $n=\sum_{i=1}^{\nu}n_i$. However, Definition \ref{def:freqAg} is the more general of the two, since it applies for any system which admits a transfer function representation, and not just systems with a finite-dimensional state-space representation.

Our next step is to try and characterize the possible poles of $(sI-A_0)^{-1}$ using only information on $G$. To this end we first state and prove the following instrumental Lemma.
\begin{lemma}\label{lem:cancel}
    Consider a square and stable system $G\in \Hinf$ and some nonzero vector $\eta$ of appropriate dimension. For any integer $m\geq 1$ and complex point $s_0\in \bar{\mathbb{C}}_0$ we have
    \[
    G(s) \eta \frac{1}{(s-s_0)^m}\in \Hinf \implies \eta \in \ker G(s_0).
    \]
\end{lemma}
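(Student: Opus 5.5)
The plan is to argue directly from boundedness, with no factorization machinery. Set $F(s)\coloneqq G(s)\eta\,(s-s_0)^{-m}$. By hypothesis $F\in\Hinf$, so $F$ is analytic in $\RHP$ and $\norm{F}_\infty<\infty$. Rearranging gives the pointwise identity
\[
G(s)\eta=(s-s_0)^m F(s),\qquad s\in\RHP .
\]
This yields the estimate
\[
\norm{G(s)\eta}\le \norm{F}_\infty\,\abs{s-s_0}^m
\]
for every $s$ in the open right half-plane. The whole proof consists of letting $s\to s_0$ in this inequality. It splits into two cases according to whether $s_0$ lies in the interior $\RHP$ or on the imaginary axis.

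\emph{Case $s_0\in\RHP$.} Both $G$ and $F$ are analytic in a neighborhood of $s_0$, and $G\eta$ is continuous there. Letting $s\to s_0$ in the estimate gives $G(s_0)\eta=0$, since $m\geq 1$ and $\norm{F}_\infty$ is finite. Equivalently, $G\eta$ has a zero of order at least $m$ at $s_0$. Hence $\eta\in\ker G(s_0)$.

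\emph{Case $s_0=\J\omega_0$ on the imaginary axis.} Here $G(s_0)$ must be understood through the boundary extension $\tilde G$ fixed in the notation section, namely $G(\J\omega_0)=\lim_{\sigma\to0^+}G(\sigma+\J\omega_0)$. Apply the estimate along the horizontal ray $s=\sigma+\J\omega_0$ with $\sigma>0$:
\[
\norm{G(\sigma+\J\omega_0)\eta}\le\norm{F}_\infty\,\sigma^m\longrightarrow 0 \quad (\sigma\to0^+).
\]
So the radial limit of $G(s)\eta$ at $s_0$ exists and equals zero, and therefore $G(s_0)\eta=0$.

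The main obstacle is this boundary case, because the extension $\tilde G$ is only guaranteed to exist for almost every $\omega$, and $\omega_0$ could a priori lie in the exceptional set. I would handle it by noting that the argument above does not rely on the a.e.\ existence theorem: the bound itself \emph{forces} the radial limit of $G(s)\eta$ to exist at $\J\omega_0$ and to vanish. This is the natural meaning of $G(s_0)\eta$ under the paper's convention.

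If $G(s_0)$ is nonetheless defined for the full matrix, its action on $\eta$ equals $\lim_{\sigma\to0^+}G(\sigma+\J\omega_0)\eta$ by linearity of the limit, which we have shown is $0$. This includes the rational or continuous-up-to-the-boundary case, where $G$ is continuous on $\CRHP$ near $s_0$ and the conclusion is immediate. Squareness of $G$ plays no role in the argument beyond making $\ker G(s_0)$ well posed in the intended application.
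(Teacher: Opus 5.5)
Your proof is correct and uses the same idea as the paper's: recover $G(s)\eta=(s-s_0)^mF(s)$ and let $s\to s_0$. The paper routes this through a removable-singularity criterion and then tacitly assumes $G$ is continuous at $s_0$; you instead apply the sup-norm bound on $F$ directly and read $G(s_0)\eta$ as the radial limit fixed by the paper's boundary convention. Your version is the more careful one for $s_0=\J\omega_0$ on the imaginary axis, which is exactly the case used in Theorem~\ref{thm:tfones}: there $F$ is not defined on a punctured neighborhood of $s_0$, and an $\Hinf$ function need not be continuous up to the boundary.
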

\begin{IEEEproof}
    By assumption $F(s)\coloneqq G(s)\eta/(s-s_0)^m $ is in $\Hinf$, therefore cannot have poles in the closed right half-plane, implying that the singularity at $s_0$ is removable. Let $\eta_i$ and $g_{ij}(s)$ denote the $i$th and $(i,j)$th entries of $\eta$ and $G$ respectively, thus $F(s)$ is given by
    \[
    F(s) = \mmatrix{\frac{\sum_{j=1}^n \eta_jg_{1j}(s)}{(s-s_0)^m}\\ \vdots \\ \frac{\sum_{j=1}^n \eta_jg_{nj}(s)}{(s-s_0)^m}}\coloneqq \mmatrix{f_1(s)\\ \vdots \\ f_n(s)}.
    \]
    By definition the singularity is removable iff
    \[
    \lim_{s\to s_0} (s-s_0)f_i(s)=0 \quad \forall i\in[1,\dots,n],
    \]
    and a necessary condition for each of these scalar limits to hold is that 
    \[
    \lim_{s\to s_0} (s-s_0)^{m}f_i(s)=0 \quad \forall i\in[1,\dots,n]
    \]
    since
    \[
    \lim_{s\to s_0}\frac{1}{(s-s_0)^{m}} \to \infty.
    \]
    By assumption $G\in \Hinf$ thus $G(s_0)$ is well defined and finite, therefore
    \[
    \lim_{s\to s_0} (s-s_0)^{m}f_i(s)=\lim_{s\to s_0}\sum_{j=1}^n \eta_jg_{ij}(s)=\sum_{j=1}^n \eta_jg_{ij}(s_0)=0
    \]
    hence $\eta \in \ker G(s_0)$.
\end{IEEEproof}
Note that the kernel condition is necessary and not sufficient unless $m=1$. The following result provides a Laplace-domain criterion to analyze the possible poles of $R(sI-A_0)^{-1}$ in terms of the coprime factorization of the closed loop system. Since any system that is stabilizable by feedback admits a coprime factorization over $\Hinf$ \cite{GSm:93}, assuming their existence is not a limiting assumption. A brief overview and results on the subject is available in Appendix \ref{sec:coprf}.
\begin{theorem}\label{thm:tfones}
Consider a system whose initial condition response can be described by \eqref{eq:init_laplace}, where $G$ has full normal rank and admits a left coprime factorization $G=\Mtil^{-1}\Ntil$. Assume that \eqref{eq:init_laplace}  reaches asymptotic agreement, a pole $\lambda\in \CRHP$ of $G(s)$ is also a pole of $R(sI-A_0)^{-1}$ only if there exists some $\eta \neq 0$ such that
\[
    %\ones \otimes \eta \in \im \adjug G(s_0).
    \ones \otimes \eta \in \ker \Mtil(\lambda).
\]
Moreover, if $G$ is square and $\lambda$ is not a zero of $G$, then $\ker \Mtil(\lambda)=\ker G^{-1}(\lambda)$.
\end{theorem}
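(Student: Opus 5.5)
The plan is to rewrite the agreement decomposition \eqref{eq:freqagreement} using the left coprime factorization and then apply Lemma~\ref{lem:cancel} to the stable factor $\Mtil$. Multiplying $G(s)x_0=\ones\otimes\left(R(sI-A_0)^{-1}q_0\right)+Q(s)$ on the left by $\Mtil(s)$ gives
\[
\Ntil(s)x_0=\Mtil(s)\left(\ones\otimes R(sI-A_0)^{-1}q_0\right)+\Mtil(s)Q(s).
\]
Here $\Ntil$, $\Mtil$ and $Q$ are all in $\Hinf$, so $\Mtil(s)\left(\ones\otimes R(sI-A_0)^{-1}q_0\right)=\Ntil(s)x_0-\Mtil(s)Q(s)\in\Hinf$.

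Next, fix a pole $\lambda\in\CRHP$ of $G$ that is also a pole of $R(sI-A_0)^{-1}$. Since $\spec A_0\subset\J\mathbb{R}$, I will choose an initial condition $x_0$ for which the resulting $q_0$ excites $\lambda$. This is possible because $\lambda$ is a pole of $G$, and agreement holds for all $x_0$, so some $x_0$ must produce the $\lambda$-mode in the trajectory.

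I then expand $R(sI-A_0)^{-1}q_0$ in partial fractions around $\lambda$. The leading term is $\eta/(s-\lambda)^m$ with $\eta\neq0$ and $m\geq1$ the order of the pole in the excited component. The remaining terms, $\eta_k/(s-\lambda)^k$ for $k<m$ and the contributions of other imaginary-axis poles, are either analytic near $\lambda$ or of lower order. Multiplying the $\Hinf$ function above by $(s-\lambda)^m$ and letting $s\to\lambda$ (a limit taken within $\CRHP$, where $\Mtil$ is defined by the boundary extension), all lower-order terms vanish. What remains is $\Mtil(\lambda)(\ones\otimes\eta)=0$, exactly as in the proof of Lemma~\ref{lem:cancel}. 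In fact I can invoke Lemma~\ref{lem:cancel} directly with $G\leftarrow\Mtil$ and vector $\ones\otimes\eta$, after first subtracting the lower-order terms. Those terms are handled by induction from the lowest order upward, or more simply by the limit argument, which only uses that the order-$m$ coefficient dominates.

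The main obstacle is this isolation step. The lemma as stated treats a single term $\eta/(s-s_0)^m$, but here the trajectory is a sum over several orders and several poles. The cleanest fix is to apply the limit $\lim_{s\to\lambda}(s-\lambda)^m(\cdot)$ directly: $\Mtil$ is bounded near $\lambda$, and the contribution of the full function, which lies in $\Hinf$, has a removable singularity, so $(s-\lambda)^m$ times it tends to zero. One technical point is that this limit must be taken from inside $\RHP$ when $\lambda\in\J\mathbb{R}$, using the paper's convention on boundary values. I also need $\Mtil$ to be continuous at $\lambda$, which I will assume from the convention, or justify when $\Mtil$ is rational.

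For the second claim, let $G$ be square and $\lambda$ not a zero of $G$. Then $G^{-1}=\Ntil^{-1}\Mtil$, and $\Ntil(\lambda)$ is invertible: a coprime $\Ntil$ loses rank in $\CRHP$ exactly at the zeros of $G$, using the Bezout identity $\Mtil X+\Ntil Y=I$. Evaluating at $\lambda$ gives $G^{-1}(\lambda)=\Ntil(\lambda)^{-1}\Mtil(\lambda)$, and hence $\ker G^{-1}(\lambda)=\ker\Mtil(\lambda)$.
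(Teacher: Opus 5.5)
Your proposal is correct and follows essentially the paper's proof. You left-multiply \eqref{eq:freqagreement} by $\Mtil$ to get $\Mtil\,(\ones\otimes R(sI-A_0)^{-1}q_0)=\Ntil x_0-\Mtil Q\in\Hinf$, extract the kernel condition at $\lambda$ by the limit argument of Lemma~\ref{lem:cancel}, and obtain the second claim from $\Mtil=\Ntil G^{-1}$ with $\Ntil(\lambda)$ invertible, exactly as the paper does. Your version is slightly tighter on two points the paper passes over: you choose $x_0$ so that the leading coefficient $\eta$ is guaranteed nonzero, and you apply $\lim_{s\to\lambda}(s-\lambda)^m(\cdot)$ to the whole sum rather than splitting it termwise ``by linearity''. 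For the same reason, drop the subtract-and-induct alternative you mention, since the individual mixed-order terms need not lie in $\Hinf$.
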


\begin{IEEEproof}
Consider a left coprime factorization of $G=\Mtil^{-1}\Ntil$, since the system reaches agreement we have
        \[
        \Ntil x_0 = \Mtil (\ones \otimes (R(sI-A_0)^{-1}q_0))+\Mtil Q
        \]
        which is stable since $\Ntil\in \Hinf$. Since $\Mtil,Q\in \Hinf$ this implies that $\Mtil$ cancels all the unstable poles of $(\ones \otimes (R(sI-A_0)^{-1}q_0))$. 
                % expanding this expression we have
        % %
        % \[
        % \Mtil(s)(\ones \otimes (R(sI-A_0)^{-1}q_0)) = \Mtil(s)(\ones \otimes I_n) R(sI-A_0)^{-1}q_0.
        % \]
        % %
        Expanding the partial fraction decomposition of $R(sI-A_0)^{-1}$ yields
        \[
        R(sI-A_0)^{-1} = \sum \frac{1}{(s-s_i)^{m_i}}K_i
        \]
        where $m_i$ is the multiplicity of the pole and $K_i$ is the residue matrix and denote $\eta_i \coloneqq K_iq_0$. Rewriting the expression we have
       \[
        \Ntil x_0-\Mtil Q = \Mtil\sum_{i=1}(\ones \otimes \eta_i)\frac{1}{(s-s_i)^{m_i}},
        \]
        and from linearity we have
        \[
        \Ntil x_0-\Mtil Q\in \Hinf \iff \Mtil(\ones \otimes \eta_i)\frac{1}{(s-s_i)^{m_i}}
        \in \Hinf \quad \forall i.
        \]
        Recall that $\Mtil\in \Hinf$ and is square, thus we can apply Lemma \ref{lem:cancel} term by term and obtain the required result. Now, if $G$ is square, with full normal rank, and $\lambda$ is not a zero of $G$, by Lemma \ref{lem:coprime} (in the appendix) we get $\ker \Mtil(\lambda)=\ker G^{-1}(\lambda)$.
\end{IEEEproof}
Note that the existence of a coprime factorization is not a restrictive condition. In fact, a system admits a coprime factorization over $\Hinf$ if and only if it can be stabilized via feedback by some casual and LTI controller \cite{Sm:89}. If $G$ is a real-rational transfer function, then there are even efficient analytical ways to construct coprime factorizations, but this may not be the case for irrational transfer functions. However, for a large class of relevant systems this can be bypassed since the assumption that $G$ is square and has no intersection between right half-plane poles and zeros is satisfied. This is illustrated in the sequel, where we provide an in-depth analysis of such a class of systems.

It is important to note that Theorem \ref{thm:tfones} assumes that the system \emph{reaches} agreement, it does not guarantee it. The actual trajectory depends on the whole system and the structure of initial conditions, as illustrated in the following two examples.
\begin{example}\label{examp:nocontrol}
Consider the aggregate dynamics given by
\[
 G(s)=(s I_{n\nu}-I_\nu \otimes M )^{-1},
\]
for some matrix $M$ with eigenvalues on the imaginary axis. Then, we trivially have 
\[
G^{-1}(s_0)(\ones \otimes \eta)=0
\]
whenever $(s_0I-M)\eta=0$. This is because for certain initial conditions, the agreement space is in the image of $G(s)$. Indeed, since the agreement space is forward invariant, if the initial conditions are chosen to be the same for all agents, i.e. $x_0=\ones \otimes \tilde{x}_0$, clearly the agents would synchronize. \Endofremark
\end{example}
A different example concerns possible cancellations due to the initial conditions, resulting in certain poles being canceled on the agreement space.
\begin{example}\label{examp:zeros}
    Consider the system
    \[
    G(s)=(sI_4 -I_2\otimes \mmatrix{0&1\\-1&0}+\mmatrix{1&-1\\-1&1}\otimes I_2)^{-1},
    \]
    comprised of two harmonics oscillators with frequency $\omega_0=1$ and controlled by a static feedback gain. It is routine to verify that the system reaches agreement for all initial conditions, since the feedback gain corresponds to the Laplacian of the complete graph. It is easily seen that
    \[
    G^{-1}(\J)\left(\ones \otimes \mmatrix{-\J\\1}\right)=0, \quad G(-\J)\left(\ones \otimes \mmatrix{\J\\1}\right)=0,
    \]
    so the conditions of Theorem \ref{thm:tfones} are satisfied for $\lambda=\pm \J$ as expected. However, the trajectory itself may depend on the initial conditions. Consider for $x_0=\mmatrix{1&-1&-1&1}'$, and the resulting transfer function is
    %
    %\begin{multline*}
    \[
    G(s)x_0=\frac{s^2+1}{(s^2+1)(s^2+4s+5)}\mmatrix{s+1\\-(s+3)\\-(s+1)\\s+3} ,
    %\adjug G(s)x_0=(s^2+1)\mmatrix{s+1&-(s+3)&-(s+1)&s+3}'
    \]
    %\end{multline*}
    %
    thus for these initial conditions the marginally stable poles are canceled and the system's output will decay to zero, i.e. a trivial agreement trajectory. \Endofremark
\end{example}
These examples illustrate that Theorem \ref{thm:tfones} can only be used to predict possible components of an agreement trajectory, and not whether the trajectory itself always exists or is comprised of all the possible components. 

%%%%%%%%%%%%%%%%%%%%%%%%%%%%
\section{Agreement trajectories in homogeneous diffusively coupled networks} \label{sec:diffusive}
%%%%%%%%%%%%%%%%%%%%%%%%%%%%
Theorem \ref{thm:tfones} provides a necessary conditions for whether an unstable pole of a transfer function $G$ can contribute to a possible agreement trajectory by examining the null space of its left denominator $\Mtil$. The condition can be simplified if $G$ is square and well defined at the prospective unstable pole $\lambda$, in which case we can consider the null space of the inverse transfer function. In this section we will show how this relaxation can be used to easily analyze the behavior of standard diffusively coupled, homogeneous networks with full state information. Full state information implies that $B_i=I$, and we also assume without loss of generality that the agents are trying to achieve full state agreement, i.e. $C_i=I$.

Following the discussion in Section \ref{sec:intro}, we wish to analyze the agreement, viz. synchronization, behavior of model \eqref{eq:dynamics_agg} when subject to dynamic perturbations with an emphasis on delays. In all cases we assume that the perturbations are LTI, which allow for analysis in the Laplace domain. To facilitate the analysis via the Laplace transform, note that the aggregate version of the dynamics is given by
\begin{equation} \label{eq:dynamics_agg}
\begin{aligned}
    \dot{x}(t)&=(I_\nu \otimes A)x(t)\\&-\int_0^{t}\left((\Deg(t-\tau)-\Adj(t-\tau))\otimes I_n\right)x(\tau)d\tau\\
    &=(I_\nu \otimes A)x(t)-\int_0^t(\Lap(t-\tau) \otimes I_n)x(\tau)d\tau.
\end{aligned}  
\end{equation}
 Note that if we pick the dynamic kernel of these matrices to be an LTI system with state space realization $(0,0,0,\Lap)$ then we get the standard static feedback $u(t)=-(\Lap\otimes I) x(t)$. We refer to the above structured kernel as the \emph{dynamic Laplacian}, a notion that has been previously used  to study the stability of interconnected systems \cite{GM:04acc} and the convergence of general consensus-seeking algorithms \cite{HT:13}. 
\begin{remark}[Homogeneous vs Heterogeneous] \label{rem:het}
    We would like to emphasize that there is no significant loss of generality in considering homogeneous agents, since a necessary condition for asymptotic agreement in heterogeneous networks is the existence of a common internal model whose state synchronizes \cite[Thm. 3]{WSA:11}. Since, as discussed in the Section \ref{sec:intro}, the agreement trajectory is often an a-priori design requirement, cooperative output regulation methods often implement local reference generators within the controller a la \cite[Eq. 10]{WSA:11}. These reference generators are homogeneous and decoupled from the rest of the dynamics and follow dynamics such as \eqref{eq:dynamics_agg}, thus the analysis in this section can simply be applied for this part only. \Endofremark
\end{remark}

Assuming that the dynamic Laplacian admits a Laplace transform, the initial condition response of the closed loop system follows \eqref{eq:init_laplace} with transfer function
\begin{equation}\label{eq:dynLap}
 G(s)\coloneqq \left(I_\nu \otimes (sI_n-A) +\Lap(s)\otimes I_n\right)^{-1}
\end{equation}
where $\Lap(s)$ is the Laplace transform of the dynamic Laplacian. Note that \eqref{eq:dynLap} is square and has full normal rank, therefore for any $\lambda \in \CRHP$ that is not a zero of $G$ we have $\ker \Mtil(\lambda) =\ker G^{-1}(\lambda)$ which has a remarkably simple structure. This structure can be exploited to provide a tractable way to check whether $\lambda\in \CRHP$ can contribute to an agreement trajectory in the sense of Definition \ref{def:freqAg}.
\begin{theorem} \label{thm:dynLap}
Consider system \eqref{eq:dynamics_agg} with some dynamic Laplacian denoted by $\Lap(s)$ such that $\Lap\in \Hinf$ and its zeros do not accumulate on the imaginary axis, and assume that $G$ admits a coprime factorization over $\Hinf$. If \eqref{eq:dynamics_agg} reaches agreement, a pole $\lambda\in \CRHP$ can contribute to the agreement trajectory only if 
\[
\Lap(\lambda)\ones=\alpha(\lambda)\ones \quad \text{and} \quad \lambda+\alpha(\lambda)\in \spec A
\]
for some possibly complex scalar $\alpha(\lambda)$.
\end{theorem}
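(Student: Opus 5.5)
The plan is to reduce the statement to Theorem~\ref{thm:tfones} by way of its second clause, so that the kernel condition on $\Mtil(\lambda)$ becomes a kernel condition on $G^{-1}(\lambda)$, which is explicit. Since $\Lap\in\Hinf$, the inverse
\[
G^{-1}(s)=I_\nu\otimes(sI_n-A)+\Lap(s)\otimes I_n
\]
is analytic on $\CRHP$ (using the boundary extension convention of the preliminaries on the imaginary axis). It is also square, and it has full normal rank because $\det G^{-1}(s)\to\infty$ behaves like $s^{n\nu}$ along large real $s$, where $\Lap$ is bounded. A zero of $G$ in $\CRHP$ would be a point where $G^{-1}$ has a pole, and this is impossible since $G^{-1}$ is analytic there. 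So every $\lambda\in\CRHP$ is not a zero of $G$, and Theorem~\ref{thm:tfones} gives: if $\lambda$ contributes to the trajectory, then there is $\eta\neq0$ with
\[
\bigl(I_\nu\otimes(\lambda I-A)+\Lap(\lambda)\otimes I_n\bigr)(\ones\otimes\eta)=0.
\]
The hypothesis that the zeros of $\Lap$ do not accumulate on the imaginary axis should be used only to ensure that the poles of $G$ on $\J\mathbb{R}$ are isolated. That makes the partial fraction argument in Theorem~\ref{thm:tfones} meaningful. I would state this explicitly rather than belabor it.

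The main step is to extract the structural condition from this kernel equation. By the mixed product rule it reads
\[
\ones\otimes(\lambda I-A)\eta+(\Lap(\lambda)\ones)\otimes\eta=0.
\]
Write $v:=\Lap(\lambda)\ones\in\mathbb{C}^\nu$ and $w:=(\lambda I-A)\eta$. The equation becomes $\ones\otimes w=-v\otimes\eta$, that is, $w=-v_i\eta$ for every block $i$.

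Since $\eta\neq0$, pick an index $k$ with $\eta_k\neq0$. Comparing the $k$th entries across blocks gives $v_i=-w_k/\eta_k$ for all $i$. So $v$ is a constant vector, $v=\alpha(\lambda)\ones$ with $\alpha(\lambda):=-w_k/\eta_k$. This is the first claim, $\Lap(\lambda)\ones=\alpha(\lambda)\ones$.

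Substituting back gives $(\lambda I-A)\eta=-\alpha(\lambda)\eta$, i.e.\ $A\eta=(\lambda+\alpha(\lambda))\eta$ with $\eta\neq0$. Hence $\lambda+\alpha(\lambda)\in\spec A$, which is the second claim.

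I expect the main obstacle to be justifying the second clause of Theorem~\ref{thm:tfones} for a possibly irrational $\Lap(s)$. That clause needs $\lambda$ to not be a zero of $G$ and $G^{-1}(\lambda)$ to be well defined, including for $\lambda$ on the imaginary axis, where the values are only boundary limits. I would handle this by noting that the kernel identity holds for the extended function $\tilde{G}^{-1}$ at almost every boundary point. For the remaining points, I would use the non-accumulation assumption together with continuity of the extension near isolated poles. If that fails, a fallback is to argue directly with $\Mtil$ via Lemma~\ref{lem:coprime}, writing $\Mtil=\Xtil^{-1}$-type relations from the Bezout identity to identify $\ker\Mtil(\lambda)$ with $\ker G^{-1}(\lambda)$.
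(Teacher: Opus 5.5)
Your proposal is correct and follows the paper's proof essentially step for step: the paper also argues that $G^{-1}(s)=I_\nu\otimes(sI_n-A)+\Lap(s)\otimes I_n$ has no singularities in $\CRHP$, so $G$ has no zeros there and $\ker\Mtil(\lambda)=\ker G^{-1}(\lambda)$ via Theorem~\ref{thm:tfones}, and then expands $G^{-1}(\lambda)(\ones\otimes\eta)=0$ with the mixed-product rule to conclude that all entries of $\Lap(\lambda)\ones$ must coincide. Your choice of an index $k$ with $\eta_k\neq 0$ simply makes explicit the step the paper calls ``clearly necessary,'' and the paper glosses over the boundary-value subtlety on $\J\mathbb{R}$ that you flag in the same way.
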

\begin{IEEEproof}
    Consider $G(s)$ as defined in \eqref{eq:dynLap}. By Theorem \ref{thm:tfones} $\lambda\in \CRHP$ can contribute to the agreement trajectory only if there is some nonzero vector $\eta$ such that $\ones \otimes \eta \in \ker \Mtil(\lambda)$, where $\Mtil$ is any left denominator of $G$. Since $sI-A$ is polynomial and $\Lap\in \Hinf$,  $G^{-1}$ is holomorphic in the open and bounded in the closed right half-plane. By definition the zeros of $G$ are the singularities of $G^{-1}$, thus $G$ has no right-half plane zeros at all and for all $\lambda \in \CRHP$ we have $\ker \Mtil(\lambda)=\ker G^{-1}(\lambda)$.

    Let $\lambda \in \CRHP$ and assume that such $\eta$ exists, then we have
    \[
    G^{-1}(\lambda)(\ones\otimes \eta)=\ones \otimes ((\lambda I_n-A)\eta)+(\Lap(\lambda)\ones)\otimes \eta =0.
    \]
    Denoting $q(\lambda)\coloneqq\Lap(\lambda)\ones$ and rewriting the Kronecker product explicitly yields
    \[
     \begin{bmatrix}
        (\lambda I_n -A)\eta\\ \vdots \\ ( \lambda I_n -A)\eta
    \end{bmatrix} = - \begin{bmatrix}
        q_1(\lambda)\eta \\ \vdots \\ q_\nu(\lambda) \eta
    \end{bmatrix}
    \]
    where $q_i(\lambda)$ is the $i$th component of $q(\lambda)$. Clearly a necessary condition for equality to hold is that 
    \[
    \alpha(\lambda) \coloneqq q_1(\lambda)=q_2(\lambda)=\cdots =q_\nu(\lambda)
    \]
    for some possibly complex scalar $\alpha(\lambda)$. The result follows immediately by rewriting the above using $\alpha(\lambda)$.
\end{IEEEproof}
\begin{remark} \label{rmk:feedback}
    Note that it is  a trivial matter to extend Theorem \ref{thm:dynLap} to include two common cases: i) a state feedback gain $K$, and ii) a dynamic observer. In the first case $G^{-1}$ will include the term $\Lap(s)\otimes K$ and the resulting condition will simply be
    \[
    (\lambda I_n-(A-\alpha(\lambda)K))\eta=0 \implies \lambda\in \spec \left(A-\alpha(\lambda)K\right).
    \]
    This case is technically more difficult to verify, but is not conceptually different from the case of $K=I_n$. The second case follows directly from the first, since such systems can be always be put into the previous structure for some augmented matrices, for instance \cite{LDCH:10,TTM:13}.\Endofremark
\end{remark}
Theorem \ref{thm:dynLap} provides general necessary conditions for some mode/frequency to contribute to an agreement trajectory. It does not guarantee that agreement will be reached, nor that all possible modes will indeed contribute to the trajectory even if it is reached. Therefore Theorem \ref{thm:dynLap} is suitable for \emph{analysis} of protocols rather than design. However, since the required trajectory is often determined a-priori by the eigenvalues of $A$, it is a useful tool for examining the robustness of the trajectory. The following result is an immediate corollary of Theorem \ref{thm:dynLap} in this vein.
\begin{corollary}\label{cor:retain}
Under the conditions of Theorem \ref{thm:dynLap}, the nominal agreement trajectory can be preserved only if
   \begin{equation} \label{eq:robust1}
     \Lap(\lambda)\ones =0, \quad \forall \lambda \in \spec A \cap \CRHP.  
   \end{equation}
   %
  % If \eqref{eq:robust} holds, then we say that the nominal trajectory is \emph{robust} with respect to the dynamic Laplacian $\Lap(s)$.
\end{corollary}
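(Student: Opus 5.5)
The plan is to apply Theorem \ref{thm:dynLap} to every mode that makes up the nominal trajectory. First I would make "nominal agreement trajectory" precise. In the unperturbed protocol the dynamic Laplacian is the static $\Lap$, which satisfies $\Lap\ones=0$. So on the agreement space $\im(\ones\otimes I_n)$ the closed loop reduces to $\dot{x}_i=Ax_i$, and the nominal trajectory is generated by $A_0$ with $\spec A_0=\spec A\cap\CRHP$ (up to multiplicities). Under the standing setup, where agreement is required on a trajectory with $\spec A_0\subset\J\mathbb{R}$, these are the imaginary-axis eigenvalues of $A$. Preserving the nominal trajectory under a perturbed dynamic Laplacian $\Lap(s)$ therefore means that every $\lambda\in\spec A\cap\CRHP$ must still be able to contribute to the perturbed agreement trajectory. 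In the frequency-domain form of Definition \ref{def:freqAg}, each such $\lambda$ must be a pole of $R(sI-A_0)^{-1}$ for the perturbed system.

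Next, fix any $\lambda\in\spec A\cap\CRHP$. By Theorem \ref{thm:dynLap}, whose hypotheses are assumed here, $\lambda$ can contribute only if there is a scalar $\alpha(\lambda)$ with $\Lap(\lambda)\ones=\alpha(\lambda)\ones$ and $\lambda+\alpha(\lambda)\in\spec A$. The proof of that theorem gives more: there is a nonzero $\eta$ with $(\lambda I_n-A)\eta=-\alpha(\lambda)\eta$. This says $\eta$ is an eigenvector of $A$ with eigenvalue $\lambda+\alpha(\lambda)$.

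The key step, and the main obstacle, is to conclude $\alpha(\lambda)=0$ rather than merely $\lambda+\alpha(\lambda)\in\spec A$. A nonzero $\alpha(\lambda)$ could in principle carry $\lambda$ onto a different eigenvalue of $A$. I would tie $\eta$ to the mode being preserved. The residue vector attached to $\lambda$ in the partial fraction expansion of the nominal trajectory is $\ones\otimes\eta$, where $\eta$ lies in the (generalized) eigenspace of $A$ for $\lambda$. The cancellation argument of Theorem \ref{thm:tfones} and Lemma \ref{lem:cancel} forces $G^{-1}(\lambda)(\ones\otimes\eta)=0$ for this same residue. Combining gives $(\lambda I-A)\eta=-\alpha(\lambda)\eta$.

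Then I would split into two cases. If $\eta$ is a genuine eigenvector of $A$ for $\lambda$, the left-hand side vanishes, so $\alpha(\lambda)\eta=0$ with $\eta\neq0$, and hence $\alpha(\lambda)=0$. If $\eta$ is a generalized eigenvector, I would use the top-order term of the expansion. Its residue is a genuine eigenvector, and Lemma \ref{lem:cancel} applies to it directly, which yields the same conclusion.

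In either case $\Lap(\lambda)\ones=\alpha(\lambda)\ones=0$. Since $\lambda\in\spec A\cap\CRHP$ was arbitrary, this establishes \eqref{eq:robust1}. If $\Lap(\lambda)\ones\neq0$ for some such $\lambda$, either it is not a multiple of $\ones$, and Theorem \ref{thm:dynLap} excludes $\lambda$ outright, or $\alpha(\lambda)\neq0$, and the required eigenvector relation fails. In both situations the nominal mode at $\lambda$ is lost, so the nominal trajectory is not preserved.
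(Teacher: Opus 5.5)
Your proposal is correct and follows the paper's route. The paper gives no separate proof; it calls this an immediate consequence of Theorem~\ref{thm:dynLap} applied to each $\lambda\in\spec A\cap\CRHP$, and you spell out the step it leaves implicit. That step is why $\alpha(\lambda)$ must vanish rather than merely satisfy $\lambda+\alpha(\lambda)\in\spec A$: preserving the nominal mode ties the kernel vector $\eta$ (the top-order residue) to the eigenspace of $A$ at $\lambda$, so $G^{-1}(\lambda)(\ones\otimes\eta)=(\Lap(\lambda)\ones)\otimes\eta=0$ forces $\Lap(\lambda)\ones=0$.
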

Corollary \ref{cor:retain} provides an explicit necessary condition for a nominal agreement trajectory to be robust to perturbations. This condition has a straightforward interpretation - the Laplacian dynamics must not interfere with the nominal unstable poles. This is naturally the case for the static Laplacian because by design $\ones \in \ker \Lap$, thus if $\Lap(\lambda)=\Lap$ then $\lambda$ is not effected by the Laplacian dynamics. A special case of condition \eqref{eq:robust1} for integrator agents with a distributed delay appears in \cite[Prop. 2.8]{MMN:09}, derived from a geometric perspective and in state-space. 

In contrast, by Theorem \ref{thm:dynLap} it is clear that in some cases the Laplacian dynamics may result in a \emph{different} agreement trajectory. This discussion motivates a more formal discussion on robustness and fragility of diffusively coupled systems from the perspective of the resulting trajectories. This is the focus of the following section, where we explicitly analyze several common cases.
%%%%%%%%%%%%%%%%
\section{Fragility and robustness of trajectories} \label{sec:frag}
%%%%%%%%%%%%%%%%
 
The first step in our analysis is to properly define robustness of agreement trajectories. 
\begin{definition}[Robust agreement trajectories] \label{def:robust}
Consider system \eqref{eq:dynamics_agg} with transfer function \eqref{eq:dynLap}, where $\Lap(s)$ and $G(s)$ satisfy the conditions of Theorem \ref{thm:dynLap}. We say $A$ generates a \emph{robust trajectory} with respect to $\Lap(s)$ if it can be preserved in presence of perturbations and if
    \begin{equation}\label{eq:robust2}
        \Lap(\lambda)\ones=\alpha(\lambda)\ones \implies \lambda \in \spec A \cap \CRHP.   
    \end{equation}
If the trajectory is not robust, then we say that it is \emph{fragile}.
\end{definition}
Robustness in the sense of Definition \ref{def:robust} implies that not only can the nominal trajectory be preserved, but that the unstable eigenvalues of $A$ are the only poles capable of generating an agreement trajectory. Moreover, robustness in this sense is, in general, dependent on both $A$ and $\Lap(s)$, with some trajectories being robust to certain network dynamics while others being fragile. In the rest of this section, we shall focus on four common types of network dynamics grouped into two distinct groups: structure preserving dynamics and transmission dynamics.

%%%%%%%%%%%%%%%%%%%%%%%%%%%%%%
 \subsection{Structure preserving dynamics} \label{sec:main:preserve}
%%%%%%%%%%%%%%%%%%%%%%%%%%%%%
% Since this definition is independent of $A$, it is reasonable to assume that if it holds $\Lap(s)$ must have some special structure that is related to that of the static Laplacian.
%
\begin{definition}[Structure-preserving dynamics] \label{def:preserv}
    Let $\Lap(s)$ be the transfer function of a dynamic Laplacian. We say that the dynamics are \emph{structure-preserving} if they preserve the null space of the static Laplacian.
    %the image and null space of the static Laplacian. 
\end{definition}
It is clear that there are three possible forms for structure-preserving dynamics:
\begin{enumerate}
    \item network output dynamics $\Lap(s)=\Delta(s)\Lap$,
    \item network input dynamics $\Lap(s)=\Lap \Delta(s)$, and
    \item edge dynamics $\Lap(s)=E\diag\{\Delta_i(s)\} E'$ where $E$ is the incidence matrix.
\end{enumerate}
\begin{remark}
    Recall that if $\Gr$ is directed, then the static Laplacian can be written as $\Lap=BE'$ for some binary matrix $B$, slightly modifying the edge dynamics, but still making them structure preserving. \Endofremark
\end{remark}
Structure preserving dynamics are by far the most common ones considered in the literature. These encompass actuation delays \cite{O-SM:04,QGY:19}, weighted interconnections \cite{ZB:17}, and systems with additive uncertainty \cite{TTM:13}. These are not only limited to perturbed systems, since the third type covers any purely diffusively coupled system \cite{BMZ:23}. The following result formalizes the trajectory-robustness property of structure-preserving dynamics.
\begin{theorem}\label{thm:robustPreserve}
   If the conditions of Theorem \ref{thm:dynLap} are satisfied, the Laplacian dynamics are structure preserving, and the underlying graph is weakly connected, then every nominal trajectory is robust.
   %then only the eigenvalues of $A$ can contribute to an agreement trajectory.
\end{theorem}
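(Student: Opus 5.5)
The plan is to check the two requirements of Definition \ref{def:robust} separately: first that the nominal trajectory can be preserved, i.e.\ condition \eqref{eq:robust1} of Corollary \ref{cor:retain}; and second the implication \eqref{eq:robust2}. Both reduce to one structural fact. For every structure-preserving $\Lap(s)$ and every $\lambda\in\CRHP$,
\[
\Lap(\lambda)\ones=0,
\]
so the only admissible value in Theorem \ref{thm:dynLap} is $\alpha(\lambda)=0$.

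\textbf{Step 1: the structural fact.} I verify it for each of the three forms listed after Definition \ref{def:preserv}, using $\Lap\ones=0$ and $E'\ones=0$ for any digraph.
\begin{itemize}
\item For network output dynamics, $\Delta(\lambda)\Lap\ones=0$.
\item For edge dynamics, $E\diag\{\Delta_i(\lambda)\}E'\ones=0$. The directed version $B\diag\{\Delta_i(\lambda)\}E'\ones=0$ is identical.
\item For network input dynamics, $\Lap\Delta(\lambda)\ones$ vanishes exactly when $\Delta(\lambda)\ones\in\ker\Lap$.
\end{itemize}
Weak connectivity gives $\ker\Lap=\im\ones$, so the last condition means $\Delta(\lambda)\ones\in\im\ones$. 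I would read Definition \ref{def:preserv}, i.e.\ preservation of the null space of $\Lap$, as precisely this requirement (equivalently, $\ker\Lap(s)\supseteq\im\ones$ for every $s$). In that case $\Lap(\lambda)\ones=0$ in all cases.

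\textbf{Step 2: preservation.} Setting $\lambda\in\spec A\cap\CRHP$ in Step 1 gives \eqref{eq:robust1} directly. For the converse direction I would argue that the agreement subspace $\ones\otimes\mathbb{C}^n$ is invariant. By Step 1,
\[
G^{-1}(s)(\ones\otimes\eta)=\ones\otimes\bigl((sI-A)\eta\bigr),
\]
so initial conditions $x_0=\ones\otimes\tilde x_0$ give $G(s)x_0=\ones\otimes(sI-A)^{-1}\tilde x_0$, exactly as in Example \ref{examp:nocontrol}. The nominal trajectory is therefore reproduced unchanged, and every unstable eigenvalue of $A$ genuinely appears.

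\textbf{Step 3: the implication \eqref{eq:robust2}.} Suppose $\lambda\in\CRHP$ is a pole that can contribute to a trajectory. By Theorem \ref{thm:dynLap} there is some $\alpha(\lambda)$ with $\Lap(\lambda)\ones=\alpha(\lambda)\ones$ and $\lambda+\alpha(\lambda)\in\spec A$. Step 1 forces $\alpha(\lambda)=0$, since $\ones\neq0$. Hence $\lambda\in\spec A\cap\CRHP$, so no frequencies other than the unstable eigenvalues of $A$ can appear. Here I would read \eqref{eq:robust2} in conjunction with the spectral condition of Theorem \ref{thm:dynLap}: with $\alpha=0$, the bare identity $\Lap(\lambda)\ones=0$ holds for every $\lambda$, so only the pair of conditions is meaningful.

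\textbf{Main obstacle.} The delicate point is the input-dynamics case together with the precise meaning of \eqref{eq:robust2}. For input dynamics, $\Lap\Delta(s)\ones\in\im\ones$ need not vanish unless $\Delta$ preserves $\im\ones$, so I would make that assumption explicit as the content of "structure-preserving." Weak connectivity is exactly what turns $\ker\Lap=\im\ones$ into the identification $\alpha=0$ and excludes spurious agreement directions.
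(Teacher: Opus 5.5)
Your proof is correct under your literal reading of Definition~\ref{def:preserv}. For output and edge dynamics it coincides with the paper's argument; the routes differ in the input case $\Lap(s)=\Lap\Delta(s)$ and in Step~2.

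The paper puts no condition on the matrix $\Delta$. It takes the left null vector $\eta$ of $\Lap$ and observes that $\Lap\Delta(\lambda)\ones=\alpha(\lambda)\ones$ gives $0=\eta'\Lap\Delta(\lambda)\ones=\alpha(\lambda)\eta'\ones$, where $\eta'\ones\neq0$ because the zero eigenvalue is simple. Thus $\im\Lap\cap\im\ones=\{0\}$ and $\alpha(\lambda)=0$. This buys generality for \eqref{eq:robust2}: no spurious pole $\lambda\notin\spec A$ can arise under any input perturbation. It is also where the paper actually uses connectivity.

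However, the paper's argument only shows that $\Lap(\lambda)\ones\in\im\ones$ forces $\Lap(\lambda)\ones=0$, not that $\Lap(\lambda)\ones$ vanishes. So it does not deliver the preservation half \eqref{eq:robust1}. For example, take a connected undirected graph with $\Delta=\diag\{\delta_i\}$ and $\delta_i(\lambda)$ not all equal. Then $\Lap\Delta(\lambda)\ones$ is a nonzero vector of $\im\Lap$, hence not in $\im\ones$, and by Theorem~\ref{thm:dynLap} the nominal mode at $\lambda$ cannot contribute.

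Your explicit assumption $\Delta(\lambda)\ones\in\im\ones$ is precisely what closes this gap. It also validates your invariance argument in Step~2, which the paper does not give. That argument needs $\Lap(s)\ones=0$ for all $s$, and it shows the nominal trajectory is actually realized from agreement-space initial conditions.

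One correction: your own argument never uses connectivity. There $\alpha(\lambda)=0$ follows from $\Lap(\lambda)\ones=0$ and $\ones\neq0$, while $\Lap\ones=0$ and $E'\ones=0$ hold for every digraph. Your closing remark that connectivity is what forces $\alpha=0$ therefore describes the paper's mechanism, not yours.
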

\begin{IEEEproof}
    Recall that for a weakly connected graph $\Lap$ has a simple eigenvalue at $0$ and $\ker \Lap = \im \ones$ \cite[Prop 3.8]{ME:10}. Moreover, since $\Lap(s)\in \Hinf$ we must have that $\Delta \in \Hinf$ and in particular $\Delta(\lambda)$ is well defined. Thus, for the three structure-preserving dynamics as in Definition \ref{def:preserv} we have the following results.
    \begin{enumerate}
    %[label=\roman*]
        \item  Since $\Delta(\lambda)$ is well defined and $\Lap\ones=0$, we must have $\Lap(\lambda)\ones=0$.
        \item Assume by contradiction that $\Lap(\lambda)\ones=\alpha(\lambda)\ones$ for a non-zero $\alpha(\lambda)$, and let $\eta\neq 0$ denote the left eigenvector of $\Lap$ corresponding to eigenvalue $0$. Since by assumption $\Lap(\lambda)\ones \neq 0$ we have
        \[
        \eta' \Lap \Delta(\lambda)\ones =0 \iff \alpha(\lambda)\eta' \ones =0,
        \]
        but since $0$ is a simple eigenvalue $\eta' \ones \neq 0$, a contradiction.
        \item The final result follows by the same arguments as i. and the fact that by construction $\ones \in \ker E' $ for any graph, e.g. \cite[Def. 2.5]{ME:10}.
    \end{enumerate}
    Thus, for a structure preserving perturbation $\Lap(\lambda)\ones =\alpha(\lambda)\ones$ if and only if $\alpha(\lambda)=0$. Consequently, by Theorem \ref{thm:dynLap} any trajectory is robust as in Definition \ref{def:robust}.
\end{IEEEproof}
Before we move on to other types of dynamic Laplacians, it is important to notice that robustness of the original trajectory does not imply that we reach agreement. The only guarantee is that if the systems reaches agreement on a non trivial trajectory, then this trajectory must be generated by the eigenvalues of $A$. There are numerous works which consider structure-preserving network structures that require additional constraints on the parameters to realize the agreement trajectory, e.g. \cite{O-SM:04,MMN:09,ZB:17,QGY:19}.

%%%%%%%%%%%%%%%%%%%%%%%%%%%%%%
 \subsection{Transmission dynamics} \label{sec:main:transdyn}
%%%%%%%%%%%%%%%%%%%%%%%%%%%%%
The second common type of network dynamics are transmission only dynamics, which are defined as follows.
\begin{definition}[Transmission dynamics] \label{def:trans}
    Let $\Lap(s)$ be the transfer function of a dynamic Laplacian. We say that the Laplacian has are \emph{transmission only dynamics} if only the adjacency matrix is dynamic, i.e. $\Lap(s)=\Deg-\Adj(s)$. In this case we denote by $\Delta_{ij}(s)$ the $ij$th entry of $\Adj(s)$.
    %the image and null space of the static Laplacian. 
\end{definition}
% At:14
The most common transmission dynamics are transmission delays, which have been widely studied in the context of agreement to a constant, i.e. consensus \cite{M:05,SDJ:08,MPA:10}. The following result is a variation of Theorem \ref{thm:dynLap} for this special case.
\begin{theorem}\label{thm:generaltrans}
   Assume that the conditions of Theorem \ref{thm:dynLap} are satisfied and the Laplacian dynamics are transmission only. If \eqref{eq:dynamics_agg} reaches agreement, a pole $\lambda\in \CRHP$ can contribute to the agreement trajectory only if 
   \[
     d_i-d_k=\sum_{j=1}^{\nu}(\Delta_{ij}(\lambda)-\Delta_{kj}(\lambda))  \quad \forall i,k
    \]
    where $d_i$ is the degree of the $i$th agent and
    \[
    \lambda+d_i-\sum_{j=1}^{\nu}\Delta_{ij}(\lambda)\in \spec A.
    \]
    Equivalently, we require that
    \[
    (\Adj-\Adj(\lambda))\ones=\alpha(\lambda)\ones \quad \text{and} \quad \lambda+\alpha(\lambda)\in \spec A
    \]
    where $\Adj$ is the static adjacency matrix.
% %
% \[
% \Lap(\lambda)\ones=\alpha(\lambda)\ones \quad \text{and} \quad \lambda+\alpha(\lambda)\in \spec A
% \]
% %
% for some possibly complex scalar $\alpha(\lambda)$.
%    %then only the eigenvalues of $A$ can contribute to an agreement trajectory.
\end{theorem}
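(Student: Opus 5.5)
This is a direct specialization of Theorem \ref{thm:dynLap}, so the plan is to invoke that theorem and then unpack its two conditions for the transmission-only Laplacian $\Lap(s)=\Deg-\Adj(s)$. Since the hypotheses of Theorem \ref{thm:dynLap} are assumed, any pole $\lambda\in\CRHP$ contributing to the agreement trajectory satisfies
\[
\Lap(\lambda)\ones=\alpha(\lambda)\ones, \qquad \lambda+\alpha(\lambda)\in\spec A
\]
for some scalar $\alpha(\lambda)$. Note that $\Lap\in\Hinf$ forces each $\Delta_{ij}\in\Hinf$, because $\Deg$ is constant. Hence $\Adj(\lambda)$ is well defined at every $\lambda\in\CRHP$, up to the boundary extension convention from the preliminaries.

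The first step is to compute $\Lap(\lambda)\ones$ componentwise. Its $i$th entry is
\[
d_i-\sum_{j=1}^{\nu}\Delta_{ij}(\lambda).
\]
The requirement that all entries equal a common $\alpha(\lambda)$ is exactly the statement that these quantities coincide for all pairs $i,k$. Subtracting the $k$th entry from the $i$th gives the displayed condition
\[
d_i-d_k=\sum_{j=1}^{\nu}\bigl(\Delta_{ij}(\lambda)-\Delta_{kj}(\lambda)\bigr).
\]
Substituting $\alpha(\lambda)=d_i-\sum_j\Delta_{ij}(\lambda)$, which is independent of $i$ under the first condition, into $\lambda+\alpha(\lambda)\in\spec A$ yields the second displayed condition.

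For the equivalent matrix form, I use the static identity $\Deg\ones=\Adj\ones$ recorded in the preliminaries. This gives
\[
\Lap(\lambda)\ones=\Deg\ones-\Adj(\lambda)\ones=(\Adj-\Adj(\lambda))\ones,
\]
so the condition $\Lap(\lambda)\ones=\alpha(\lambda)\ones$ is literally $(\Adj-\Adj(\lambda))\ones=\alpha(\lambda)\ones$. The spectral condition is unchanged.

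There is no real obstacle here. The only point needing care is confirming that the transmission-only Laplacian inherits the hypotheses of Theorem \ref{thm:dynLap}: it lies in $\Hinf$, its zeros do not accumulate on the imaginary axis, and a coprime factorization exists. These are assumed in the statement, so the proof reduces to this bookkeeping.
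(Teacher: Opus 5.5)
Your proposal is correct and matches the paper's proof: both specialize Theorem~\ref{thm:dynLap} to $\Lap(s)=\Deg-\Adj(s)$, equate the entries $d_i-\sum_{j}\Delta_{ij}(\lambda)$ to get the pairwise and spectral conditions, and use $\Deg\ones=\Adj\ones$ for the equivalent matrix form. Your side remark that $\Lap\in\Hinf$ forces each $\Delta_{ij}\in\Hinf$ (since $\Deg$ is constant) is a small clarification the paper leaves implicit.
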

\begin{IEEEproof}
    Consider a transmission only dynamic Laplacian $\Lap(s)=\Deg-\Adj(s)$. Applying Theorem \ref{thm:dynLap} we can obtain
    \[
    \Lap(\lambda)\ones =\alpha(\lambda)\ones =\begin{bmatrix}
        d_1-\sum_{j=1}^{\nu}\Delta_{1j}(\lambda) \\ \vdots \\ d_\nu-\sum_{j=1}^{\nu}\Delta_{\nu j}(\lambda).
    \end{bmatrix}
    \]
    Since $\alpha(\lambda)$ is a scalar all entries of the RHS must be identical, thus
    \[
    \alpha(\lambda) = d_i-\sum_{j=1}^{\nu}\Delta_{ij}(\lambda) \quad \forall i,
    \]
    from which the second condition follows immediately. Choosing arbitrary $i$ and $k$, equating them, and rearranging yields the first condition. The equivalent characterization follows from the fact that every digraph satisfies $\Deg\ones=\Adj\ones$ \cite[\S 2.3.5]{ME:10}.
\end{IEEEproof}
Comparing Theorems \ref{thm:robustPreserve} and \ref{thm:generaltrans}, it is evident that the more the dynamics affect the null space of the Laplacian, the less likely it is to reach agreement. Both characterizations in Theorem \ref{thm:generaltrans} indicate that to analyze the agreement trajectory we must analyze every pair of agents in the graph at all possible $\lambda$. In particular, in order to preserve the original trajectory it is clear that we must have $\Adj(\lambda)=\Adj$ for all $\lambda\in \CRHP \cap \spec A$. Essentially, the dynamics must leave the original poles \emph{invariant}. This insight is the key idea as to why consensus is remarkably robust against constant \cite{SDJ:08}, distributed \cite{At:14}, and even heterogeneous \cite{MPA:10} transmission delays, as formalized in the following corollary.
\begin{corollary} \label{cor:consDelay}
    Consider system \eqref{eq:dynamics_agg} with constant, heterogeneous, transmission only delays. If $0\in \spec A$, then it can always contribute to an agreement trajectory.
\end{corollary}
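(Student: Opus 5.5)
The plan is to check the necessary condition of Theorem \ref{thm:generaltrans} directly at $\lambda=0$. For constant heterogeneous transmission delays the dynamic adjacency matrix has entries $\Delta_{ij}(s)=a_{ij}\E^{-\tau_{ij}s}$, where $a_{ij}$ are the static weights and $\tau_{ij}\geq 0$ the delays. The degree matrix $\Deg$ stays static. First we verify the standing assumptions of Theorem \ref{thm:dynLap}:
\begin{itemize}
\item Each $\E^{-\tau_{ij}s}$ is analytic and bounded by $1$ on $\CRHP$, so $\Lap(s)=\Deg-\Adj(s)\in\Hinf$.
\item The zeros of $\Lap$ do not accumulate on the imaginary axis. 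This holds because the entries of $\Lap$ are exponential polynomials of finite order, whose zeros are isolated.
\item We take the existence of a coprime factorization of $G$ as given, as in the statement's framework.
\end{itemize}

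The key step is the evaluation $\E^{-\tau_{ij}\cdot 0}=1$ for every $i,j$, regardless of the (heterogeneous) delay values. This gives $\Adj(0)=\Adj$, so $(\Adj-\Adj(0))\ones=0=\alpha(0)\ones$ with $\alpha(0)=0$. Then $\lambda+\alpha(\lambda)=0+0=0\in\spec A$ by hypothesis. Hence both conditions of the equivalent characterization in Theorem \ref{thm:generaltrans} hold at $\lambda=0$.

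Equivalently, in the language of Corollary \ref{cor:retain}, $\Lap(0)\ones=(\Deg-\Adj)\ones=0$, using $\Deg\ones=\Adj\ones$. So the pole at the origin is never excluded by the necessary condition.

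To justify the phrase ``can contribute'' beyond the mere necessary condition, I would also exhibit the kernel vector. Take $\eta\in\ker A$ nonzero. Then
\[
G^{-1}(0)(\ones\otimes\eta)=\ones\otimes(-A\eta)+(\Lap(0)\ones)\otimes\eta=0,
\]
so $\ones\otimes\eta\in\ker G^{-1}(0)=\ker\Mtil(0)$, as required by Theorem \ref{thm:tfones}.

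The main subtlety is interpretive rather than technical. The corollary asserts only that $0$ satisfies the criterion for every choice of delays, not that agreement is actually reached. The proof should therefore stress that this holds independently of the $\tau_{ij}$. A secondary point is that $0$ is not a zero of $G$, which is needed to identify $\ker\Mtil(0)$ with $\ker G^{-1}(0)$. It follows because $G^{-1}$ is entire, as argued in the proof of Theorem \ref{thm:dynLap}.
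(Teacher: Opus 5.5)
Your proposal is correct and is the paper's argument. Since $\E^{-\tau_{ij}\cdot 0}=1$ for all delays, $\Adj(0)=\Adj$, hence $\alpha(0)=0$ and $0+\alpha(0)\in\spec A$, so Theorem~\ref{thm:generaltrans} never excludes $\lambda=0$. The extras are not needed: the kernel-vector computation is only the equivalent form of the same necessary condition of Theorem~\ref{thm:tfones} (the reduction in the proof of Theorem~\ref{thm:dynLap} is an equivalence), so it does not go ``beyond'' it, and the non-accumulation check should be argued through the minors of $\Lap(s)$ rather than its entries, some of which vanish identically.
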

\begin{IEEEproof}
    Let $\tau_{ij}$ denote the transmission delay between agent $i$ and $j$, then $\Delta_{ij}(s)=\E^{-\tau_{ij}s}$. Clearly $\Delta_{ij}(0)=1$ for all $\tau_{ij}$, therefore $\Adj(0)=\Adj$ and the result follows immediately from Theorem \ref{thm:generaltrans}.
\end{IEEEproof}
If we consider only uniform transmission dynamics, i.e. $\Adj(s)=\Delta(s)\Adj$ for some scalar transfer function $\Delta(s)$, we can obtain even simpler conditions.
\begin{corollary}\label{cor:UniTrans}
Assume that the conditions of Theorem \ref{thm:dynLap} are satisfied, and that $\Lap(s)=\Deg-\Delta(s)\Adj$ has uniform transmission dynamics. If \eqref{eq:dynamics_agg} reaches agreement, a pole $\lambda\in \CRHP$ can contribute to the agreement trajectory only if one of the following holds:
 \begin{enumerate}
        \item $\Delta(\lambda)=1$ and $\lambda\in \spec A$.
        \item $\Gr$ is $d-$regular and $\lambda+d(1-\Delta(\lambda))\in \spec A$.
\end{enumerate}
    % Assume that the Laplacian is effected by a uniform transmission perturbation, i.e. $\Lap(s)=\Deg-F(s)\Adj$. If \Ass{\ref{ass:M},\ref{ass:L}} hold, then a pole $s_0=\J\omega_0$ can contribute to an agreement trajectory only if one of the following holds:
    %  \begin{enumerate}
    %     \item $F(s_0)=1$ and $s_0\in \spec M$.
    %     %
    %     \item $\Gr$ is $d-$regular and $s_0+d(1-F(s_0))\in \spec M$.
    % \end{enumerate}
\end{corollary}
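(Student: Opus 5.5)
The plan is to specialize Theorem \ref{thm:generaltrans} (equivalently Theorem \ref{thm:dynLap}) to the uniform case $\Adj(s)=\Delta(s)\Adj$ and split according to whether $\Delta(\lambda)=1$. Since the hypotheses of Theorem \ref{thm:dynLap} hold, a pole $\lambda\in\CRHP$ can contribute only if $\Lap(\lambda)\ones=\alpha(\lambda)\ones$ and $\lambda+\alpha(\lambda)\in\spec A$. Using $\Adj\ones=\Deg\ones$, I will first rewrite
\[
\Lap(\lambda)\ones=\Deg\ones-\Delta(\lambda)\Adj\ones=(1-\Delta(\lambda))\Deg\ones ,
\]
so that the $i$th entry is $(1-\Delta(\lambda))d_i$. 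This reduces the whole statement to a scalar condition on the vector of degrees.

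The next step is the case split. If $\Delta(\lambda)=1$, then $\Lap(\lambda)\ones=0$, so $\alpha(\lambda)=0$ and the spectral condition becomes $\lambda\in\spec A$, which is item 1. If $\Delta(\lambda)\neq 1$, the requirement $(1-\Delta(\lambda))d_i=\alpha(\lambda)$ for all $i$ can be divided by the nonzero factor $1-\Delta(\lambda)$. This gives $d_1=\dots=d_\nu\eqqcolon d$, so $\Gr$ is $d$-regular and $\alpha(\lambda)=d(1-\Delta(\lambda))$. Substituting into $\lambda+\alpha(\lambda)\in\spec A$ yields item 2.

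The only delicate point is the degenerate regular case in which all $d_i=0$ (the empty graph). Then $\alpha=0$ and $\Gr$ is trivially $0$-regular, so item 2 still holds as stated with $d=0$; no further hypothesis is needed. Otherwise the argument is a direct computation. The main care is in justifying the identity $\Deg\ones=\Adj\ones$ for the in-degree convention, which was already used in the proof of Theorem \ref{thm:generaltrans}.
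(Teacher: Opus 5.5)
Your proof is correct and is essentially the paper's argument: use $\Adj\ones=\Deg\ones$ to write $\Lap(\lambda)\ones=(1-\Delta(\lambda))\Deg\ones$, then split into two cases using Theorem~\ref{thm:dynLap}. The only difference is that you split on whether $\Delta(\lambda)=1$, whereas the paper splits on whether $\alpha(\lambda)=0$. Your split is slightly cleaner because it avoids the paper's implicit step $\alpha(\lambda)=0\Rightarrow\Delta(\lambda)=1$, which fails for the edgeless graph; you correctly absorb that case into item 2 with $d=0$.
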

\begin{IEEEproof}
    Recall again that $\Deg\ones=\Adj\ones$, therefore
    \[
    \Lap(\lambda)\ones=(1-\Delta(\lambda))\begin{bmatrix}
        d_1 \\ \vdots \\ d_\nu
    \end{bmatrix}
    \]
    where $d_i$ are the elements of $\Deg$. By Theorem \ref{thm:dynLap} the above needs to be in the image of the all-ones vector, thus there are two possible cases. If $\alpha(\lambda)=0$, then $\Delta(\lambda)=1$ and it follows that $\lambda\in \spec A$ for the equality to hold.  If $\alpha(\lambda)\neq 0$, then $d_1=\cdots=d_\nu \coloneqq d$ implying that $\Gr$ must be $d$-regular.  In this case $\alpha(\lambda)=d(1-\Delta(\lambda))$ and the condition follows from Theorem \ref{thm:dynLap}.
\end{IEEEproof}
Notably, the well known robustness of consensus to transmission delays is not an intrinsic property of the consensus protocol's structure, but rather a consequence of well known properties of the delay operator. Namely, the delay operator has unit gain for all frequencies, and a linearly decreasing phase given by $\measuredangle \E^{-\tau \J \omega}=-\tau \omega$. Note that the first condition in Corollary \ref{cor:UniTrans} can be written in terms of \emph{both} the gain and phase of the perturbation 
\[
% \Delta(\J\omega_0)=1 \iff \abs{\Delta(\J\omega_0)}=1 \quad \text{and} \quad \measuredangle \Delta(\J\omega_0)=2\pi k.
\Delta(\J\omega_0)=1 \iff \abs{\Delta(\J\omega_0)}=1 \quad \text{and} \quad \measuredangle \Delta(\J\omega_0)=0.
\]
%
%\todo[inline]{Is it $2\pi k$ or identically zero?}
The delay element satisfies the phase condition precisely when $\tau \omega = 2\pi n$, but only for $\omega=0$ the frequency is delay independent. Therefore this robustness is due to the special behavior of the delay element at zero frequency, and in general is not robust for other stable perturbations. An example illustrating this, and in particular the importance of the phase condition, is provided in \S \ref{sec:ex:int}. Moreover, using Corollary \ref{cor:UniTrans} we can show that in general harmonic agreement trajectories are fragile for \emph{arbitrary small} transmission delays.
\begin{corollary} \label{cor:Uni_Trans_Del}
Assume that the conditions of Theorem \ref{thm:dynLap} are satisfied, that $\Lap(s)$ has some uniform transmission delay $\tau>0$, and that $\spec A=\pm \J\omega_1,\ldots, \pm\J \omega_q $ with $\omega_i\neq 0$. If 
    \[
    \tau \neq \frac{2\pi n}{\omega_i} , \quad n\in \mathbb{Z}
    \]
    for all $i$, then the system does not synchronize. In particular, the agents will not synchronize to a non-constant trajectory for any delay satisfying
    \[
    \tau < \frac{2\pi}{\max\{\omega_i\}}.
    \]
\end{corollary}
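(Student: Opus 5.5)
The plan is to apply Corollary~\ref{cor:UniTrans} with $\Delta(s)=\E^{-\tau s}$. Since $\spec A=\{\pm\J\omega_1,\ldots,\pm\J\omega_q\}$ lies on the imaginary axis, the nominal trajectory is generated only by these poles. So "the system does not synchronize" is read as: the system does not reach agreement on the nominal (non-constant) trajectory. Equivalently, none of the poles $\pm\J\omega_i$ can contribute to an agreement trajectory. I would make that interpretation explicit first, and then check the two alternatives of Corollary~\ref{cor:UniTrans} at each candidate pole $\lambda=\pm\J\omega_i$.

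\emph{Case 1 of Corollary~\ref{cor:UniTrans}.} It requires $\Delta(\lambda)=\E^{\mp\J\tau\omega_i}=1$. The delay has unit gain, so this is purely a phase condition: $\tau\omega_i=2\pi n$ for some $n\in\mathbb{Z}$. That is exactly what the hypothesis $\tau\neq 2\pi n/\omega_i$ excludes. Hence case~1 fails at every eigenvalue of $A$.

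\emph{Case 2.} This is the step I expect to be the main obstacle. If $\Gr$ is not $d$-regular, case~2 is vacuous and we are done. If $\Gr$ is $d$-regular, the corollary allows a pole $\lambda=\J\mu\in\CRHP$ with
\[
\J\mu+d\left(1-\E^{-\J\tau\mu}\right)\in\spec A .
\]
This can produce a \emph{new} frequency, so "does not synchronize" must be understood as failure to synchronize to the nominal trajectory. To rule out $\lambda=\J\omega_i$ itself, note that it would require $d(1-\E^{-\J\tau\omega_i})=\J(\omega_k-\omega_i)$ for some eigenvalue $\J\omega_k$ (with $\omega_k$ ranging over $\pm\omega_j$). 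The left-hand side has real part $d(1-\cos\tau\omega_i)$, so the real part must vanish, forcing $\cos\tau\omega_i=1$ and therefore $\E^{-\J\tau\omega_i}=1$. This again contradicts the hypothesis. So no nominal eigenvalue survives in either case.

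For a general $\lambda\in\CRHP$ off the imaginary axis, the same real-part argument works: $\operatorname{Re}\bigl(\lambda+d(1-\E^{-\tau\lambda})\bigr)\ge \operatorname{Re}\lambda+d(1-\E^{-\tau\operatorname{Re}\lambda})>0$ when $\operatorname{Re}\lambda>0$. Such a $\lambda$ therefore cannot land in $\spec A\subset\J\mathbb{R}$, so only imaginary-axis poles need consideration. I would state this briefly to show that no unstable pole generates the nominal trajectory either.

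Finally, the second claim follows from the first. If $0<\tau<2\pi/\max\{\omega_i\}$, then for every $i$ we have $0<\tau\omega_i<2\pi$, so $\tau\omega_i\notin 2\pi\mathbb{Z}$ and the hypothesis holds for all $i$ simultaneously. The nominal harmonic trajectory is then destroyed. Any surviving agreement could only come from a regular-graph shifted frequency, or from $\lambda=0$, which is not in $\spec A$ by assumption. Hence no non-constant nominal trajectory is attained.
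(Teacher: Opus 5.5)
There is a genuine gap in your treatment of case~2. You assert that, for a $d$-regular graph, the condition $\J\mu+d(1-\E^{-\J\tau\mu})\in\spec A$ ``can produce a new frequency.'' On that basis you weaken the statement to failure of synchronization to the \emph{nominal} trajectory; your final sentence concludes only that no non-constant nominal trajectory is attained. Under the hypothesis $\spec A\subset\J\mathbb{R}$ that assertion is false, and the corollary claims more: the agents do not synchronize to \emph{any} non-constant trajectory.

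The real-part computation you perform only at $\mu=\omega_i$ works for every real $\mu$. We have $\mathrm{Re}\bigl(\J\mu+d(1-\E^{-\J\tau\mu})\bigr)=d(1-\cos\tau\mu)\ge 0$, and this must vanish because $\spec A\subset\J\mathbb{R}$ and $d>0$. Hence $\tau\mu\in 2\pi\mathbb{Z}$, so $\E^{-\J\tau\mu}=1$, and the condition collapses to $\J\mu\in\spec A$ with $\Delta(\J\mu)=1$, i.e.\ to case~1. That leaves two possibilities, both excluded:
\begin{itemize}
\item $\mu=\pm\omega_k$ with $\tau\omega_k\in 2\pi\mathbb{Z}$, which the hypothesis rules out;
\item $\mu=0$, which is ruled out because $0\notin\spec A$.
\end{itemize}
This is the ``reduces back to the first case'' step of the paper's proof. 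The paper writes it at $\lambda=\J\omega_i$, but it is the same computation used after Remark~\ref{rem:transunstable} to show that a shifted frequency forces $A$ to have eigenvalues in $\RHP$.

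Add this step alongside your bound $\mathrm{Re}\bigl(\lambda+d(1-\E^{-\tau\lambda})\bigr)\ge\mathrm{Re}\lambda+d(1-\E^{-\tau\mathrm{Re}\lambda})>0$ on $\RHP$, which the paper does not spell out. Then no $\lambda\in\CRHP$ satisfies either alternative of Corollary~\ref{cor:UniTrans}, so no pole can contribute to any agreement trajectory. Once you drop the ``new frequency'' remark and the reinterpretation of ``does not synchronize,'' the rest of your argument matches the paper's, including case~1 and the bound $0<\tau\omega_i<2\pi$ for the second claim.
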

\begin{IEEEproof}
    Consider the two necessary conditions in Corollary \ref{cor:UniTrans} with $\Delta(s)=\E^{-\tau s}$. For the first condition, we have $\lambda=\J\omega_i\in \spec A$ and
    \[
    \E^{-\J\tau\omega_i}=1 \implies \tau=\frac{2\pi n}{\omega_i} \quad n\in \mathbb{Z},
    \]
    while the second condition requires $\Gr$ to be $d$-regular and
    \[
    \J\omega_i+d(1- \E^{-\J\tau\omega_i})\in \spec A.
    \]
    By assumption the first condition does not hold, so let us examine the second. Applying Euler's formula, we can simplify the expression to
    \[
    d(1-\cos(\omega_i\tau))+\J (\omega_i+d\sin(\omega_i \tau))\in \spec A.
    \]
    Note that since $\Gr$ is $d$-regular, $d$ is strictly positive and that $0\leq 1-\cos(\omega_i \tau)\leq 2$, thus the real part of this pole is non-negative. By assumption $A$ does not have poles with a positive real part, so the only possibility is if the real part is zero. Assume that it is, then
    \[
    1-\cos(\omega_i\tau)=0 \implies \omega_i\tau = 2\pi n, \quad \in \mathbb{Z}
    \]
    which reduces back to the first case which does not hold by assumption. Finally, since $\tau,\omega_i>0$ and $n$ is an integer
    \[
    \tau < \frac{2\pi}{\max\{\omega_i\}} \implies \tau \neq \frac{2\pi n}{\omega_i} \quad \forall n\in \mathbb{Z}, \, \omega_i \in \spec A
    \]
    therefore the agents will never synchronize to a non-constant trajectory.
\end{IEEEproof}
It is valuable to note the essential difference between the consensus case and the case of general synchronous trajectories. Namely the consensus trajectory is robust to arbitrarily large transmission delays, while general periodic trajectories are \emph{fragile} for arbitrarily small transmission delays. Such delays are almost inevitable in practical applications where the agents communicate information, in particular when using output regulation techniques where the agents exchange internal controller variables \cite{WSA:11}. Consequently, output regulation approaches aiming to synchronize with periodic trajectories must always be augmented in someway to account for possible small transmission delays. This can be accomplished in a relatively straight-forward manner via distributed predictors if the delays are known \cite{LL:17know} or measured \cite{BMZ:24cdc}, but may pose a significant challenge when the delays are unknown. One potential way to circumvent this issue is to exploit the insight that consensus as an objective is robust to arbitrary transmission delays. By having the agents communicate and agree on some dummy variable $\zeta_i(t)$ and implementing local reference signals such as $r_i(t)=\E^{A_0 t}\zeta_i(t)$. If $\zeta_i(t)$ reach agreement with a constant, then asymptotically 
\[
\lim_{t\to \infty} \norm{r_i-r_j}=0 \quad \forall i,j
\]
and all that remains is to ensure local tracking of $r_i$ 
à la \eqref{eq:timeagreement:sync}. This is essentially the mechanism in \cite[\S-3.2
]{LL:17unk}, and the same idea can potentially extend to emulation-based asynchronous sampled-data controllers such as \cite{BMZ:23cdc,BMZ:24ECC}.
\begin{remark}[Agents with nominal poles in $\RHP$] \label{rem:transunstable}
    It is important to remark that Corollary \ref{cor:Uni_Trans_Del} assumes that all the poles of $A$ are on the imaginary axis, if this assumption fails then we must resort to the more general Corollary \ref{cor:UniTrans}. An example of this is fact is provided in \S \ref{sec:ex:frag}. \Endofremark
\end{remark}
%
%%%%%%%%%%%%%%%%%%%%%%%%%%%%%
\subsection{Discussion}\label{sec:main:disc}
%%%%%%%%%%%%%%%%%%%%%%%%%%%
It is instructive to contextualize these results within the broader framework of output regulation. It is well established that a necessary and sufficient condition for synchronization is the existence of a common internal model among all agents \cite{WSA:11}. This equivalence has led to numerous solutions to the agreement problem based on solving decoupled output regulation problems locally for each agent. Despite this equivalence, there remains a subtle practical difference between classical output regulation and its modern cooperative counterpart.

In standard output regulation problems, an explicitly known exosystem generates the reference signal to track and the disturbance signal to reject. This exosystem is incorporated into the design via the celebrated internal model principle \cite{FrW:76}, yielding a stabilizing servo controller. From a frequency-domain perspective, regulation is achieved by placing closed-loop zeros in the transfer function mapping the input to the regulation error. Hence, the controller must perfectly incorporate the model of the exosystem, and even a small mismatch in that model can lead to unbounded tracking errors \cite[Thm. 3]{DG:75}. When the reference signal is perfectly known, this requirement is generally unproblematic, as the local controller implementation is accurate up to numerical precision. Nevertheless, this stringent requirement is the fundamental reason why closed-loop stability is considered a robust property, whereas output regulation is not \cite[Thm. 4]{Fr:77}.

A stark difference in the agreement problem is that no such exosystem exists in the traditional sense, as the system is driven solely by initial conditions. Therefore, the controller places the internal model's \emph{poles} in the closed loop to generate a non-trivial trajectory. This not only compromises internal stability \cite{BMZ:23} but also implies that the ``exosystem'' is \emph{internal}. Because it is realized within the loop and across the network, it is highly susceptible to unmodeled network dynamics. As a result, its poles---and by extension, the agreement trajectory---may shift. As discussed in \S~\ref{sec:agrDef}, it is common to define agreement through \eqref{eq:timeagreement:out} without explicitly specifying the target trajectory. Thus, even under architectures designed for robust agreement, network perturbations may yield significantly altered, or even unbounded, agreement trajectories.

\begin{remark}[Generalizations]
    We emphasize the generality of Theorem \ref{thm:tfones}, which requires only the existence of a coprime factorization and a finite-dimensional agreement trajectory. The subsequent focus on Laplacian-based systems, such as \eqref{eq:dynLap}, reflects their structural simplicity and prevalence in the literature rather than any inherent limitation of our framework. Similar analyses can be applied to many other LTI agreeing systems, including those utilizing the adjacency matrix \cite{Lee:16}, dynamic protocols \cite{TTM:13}, or two-degrees-of-freedom architectures \cite{BMZ:25}. Furthermore, the preceding discussion holds for these generalized cases almost verbatim. \Endofremark   
\end{remark}
 \vspace{-.3cm}
%%%%%%%%%%%%%%%%%%%%%%%%%%%%
\section{Numerical examples} \label{sec:ex}
%%%%%%%%%%%%%%%%%%%%%%%%%%%%
Below are several numerical examples illustrating the various results of this paper for the simple case of $\nu=3$ nodes and the two graphs shown in Fig.~\ref{fig:ExG}. Although we are considering a very simple graph structure, the numerical illustrations can be easily extended to larger graphs. We preferred to keep the graph simple because these small size graphs are sufficient to illustrate our theoretical results. The static degree and adjacency matrices for both graphs are given by
\[
\begin{aligned}
    \Deg^{(1)}&=\diag\{2,2,2\} & & \Adj^{(1)}=\mmatrix{0&1&1\\1&0&1\\1&1&0}\\
    \Deg^{(2)}&=\diag\{2,1,1\} & & \Adj^{(2)}=\mmatrix{0&1&1\\1&0&0\\0&1&0},
\end{aligned}
\]
note that the complete undirected graph is in particular a $2$-regular graph. The first set of examples illustrate the transmission delay fragility of synchronizing systems, while the next set illustrate how the agents may converge to a different trajectory due to transmission dynamics. 
\begin{figure}[!hbt]
 \centering
   \subfigure[Complete undirected graph.]{\label{fig:Ex:g1}\includegraphics[width=0.32\columnwidth,clip]{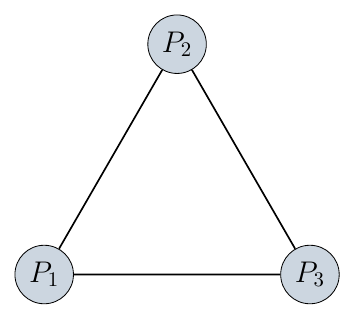}} \hspace{1cm} 
   \subfigure[A weakly connected digraph.]{\label{fig:Ex:g2}\includegraphics[width=0.32\columnwidth,clip]{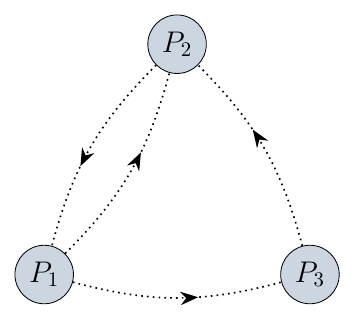}}
   %\hfill 
   % \subfigure[General transmission.]{\label{fig:pert:transge}\includegraphics[width=0.3\columnwidth,clip]{figs/Trans_gent.pdf}}
 \caption{The two graphs in the examples.} \label{fig:ExG}
 \vspace{-.5cm}
\end{figure}
%
%%%%%%%%%%%%%
\subsection{Fragility of synchronization to transmission delays} \label{sec:ex:frag} 
%%%%%%%%%%%%%
To illustrate the fragility for transmission delays, consider a nominal system interacting over the network in Fig.~\ref{fig:Ex:g1} and set to synchronize with a sine wave with frequency $\omega_0=1$, corresponding to
\begin{subequations} \label{eq:ex1}
\begin{equation} \label{eq:ex1:1}
   A=\begin{bmatrix}
    0&1\\-1&0
\end{bmatrix}. 
\end{equation}
%
% \[
% A=\begin{bmatrix}
%     0&1\\-1&0
% \end{bmatrix}.
% \]
Since $\omega_0=1$, by Corollary \ref{cor:Uni_Trans_Del} the agents cannot synchronize for any delay $\tau\neq 2\pi n$ for some integer $n$. Fig. \ref{fig:Ex1} consists of three simulations for delay values of $\tau=0.1,0.9,2\pi$, the nominal trajectories are superimposed in dashed lines. 
%width=0.9\columnwidth
\begin{figure}[!hbt]
 \centering
   \subfigure[Trajectories with $\tau=0.9$.]{\label{fig:Ex1:1}\includegraphics[scale=0.38,clip]{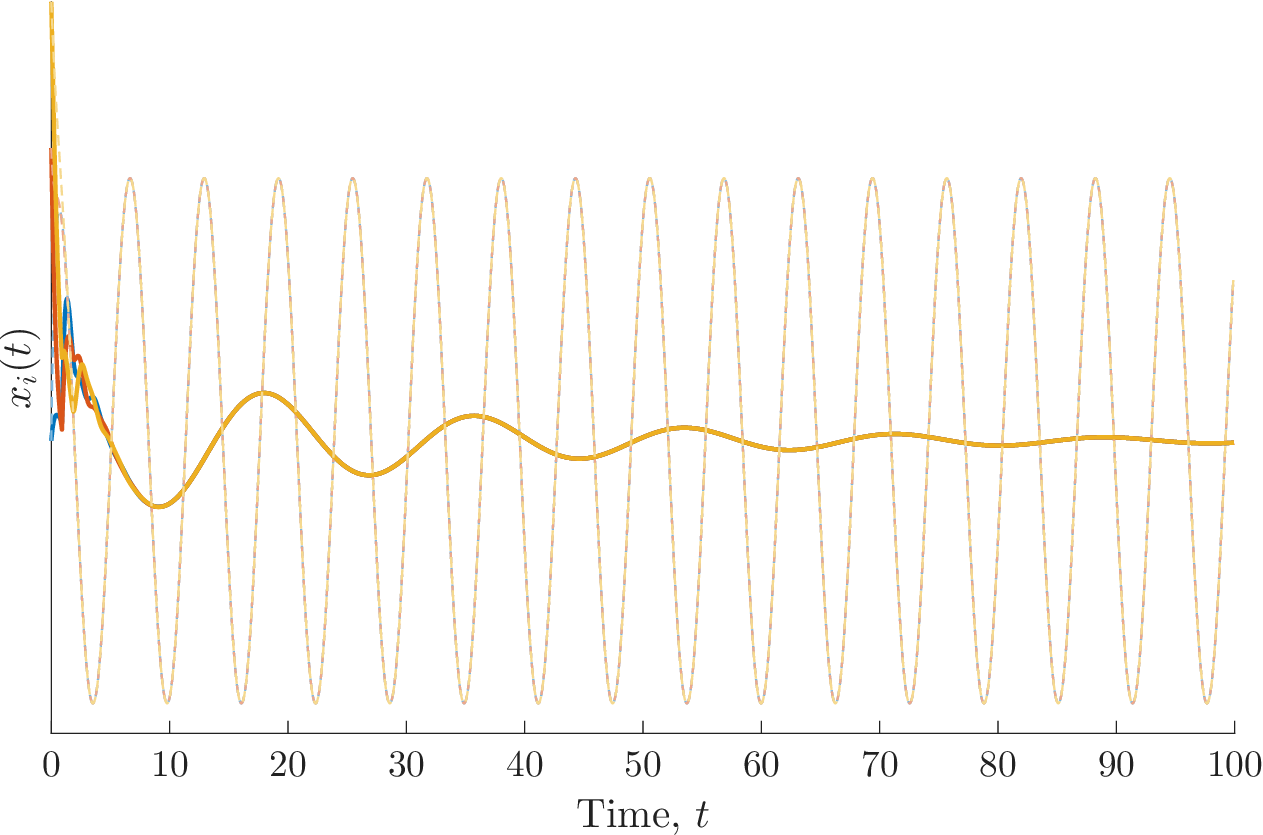}}\\
   \subfigure[Trajectories with $\tau=0.1$.]{\label{fig:Ex1:2}\includegraphics[scale=0.38,clip]{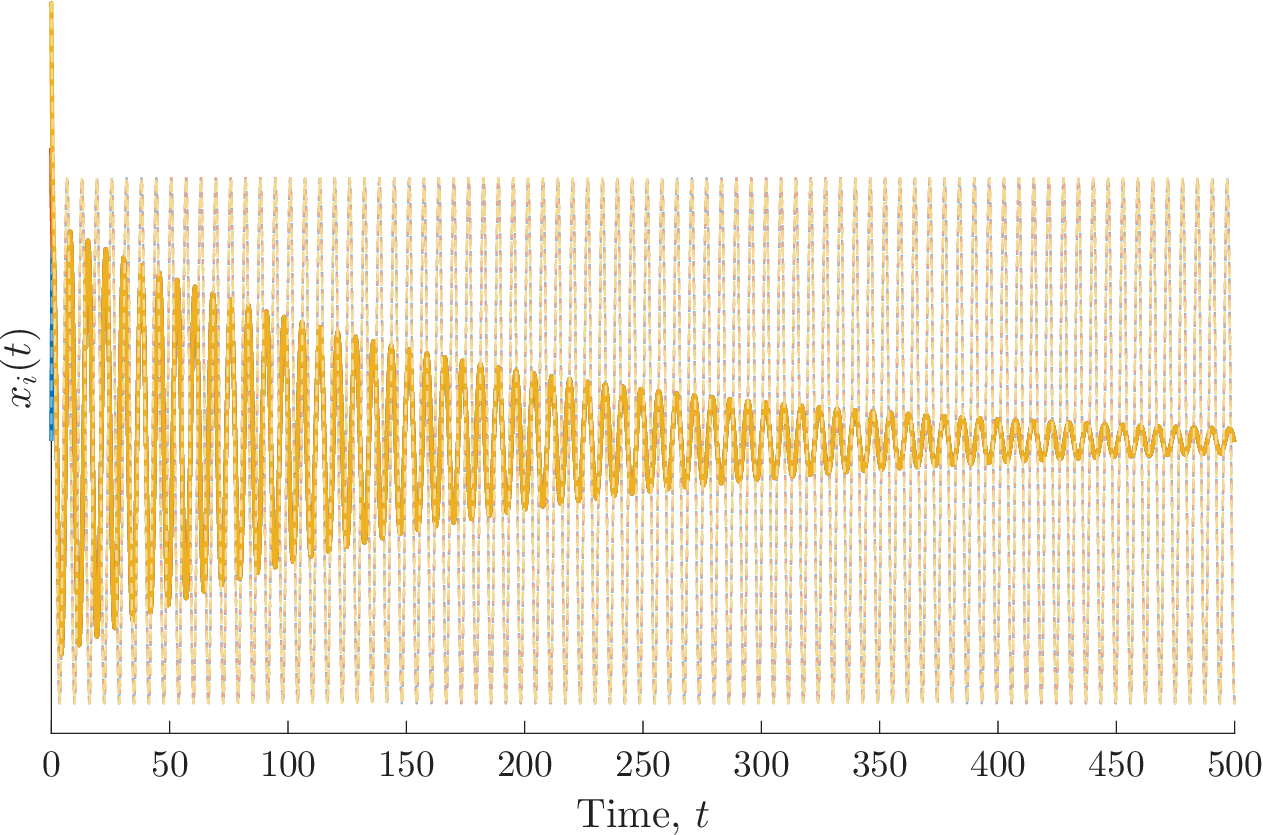}}\\
   \subfigure[Trajectories with $\tau=2\pi$.]{\label{fig:Ex1:3}\includegraphics[scale=0.38,clip]{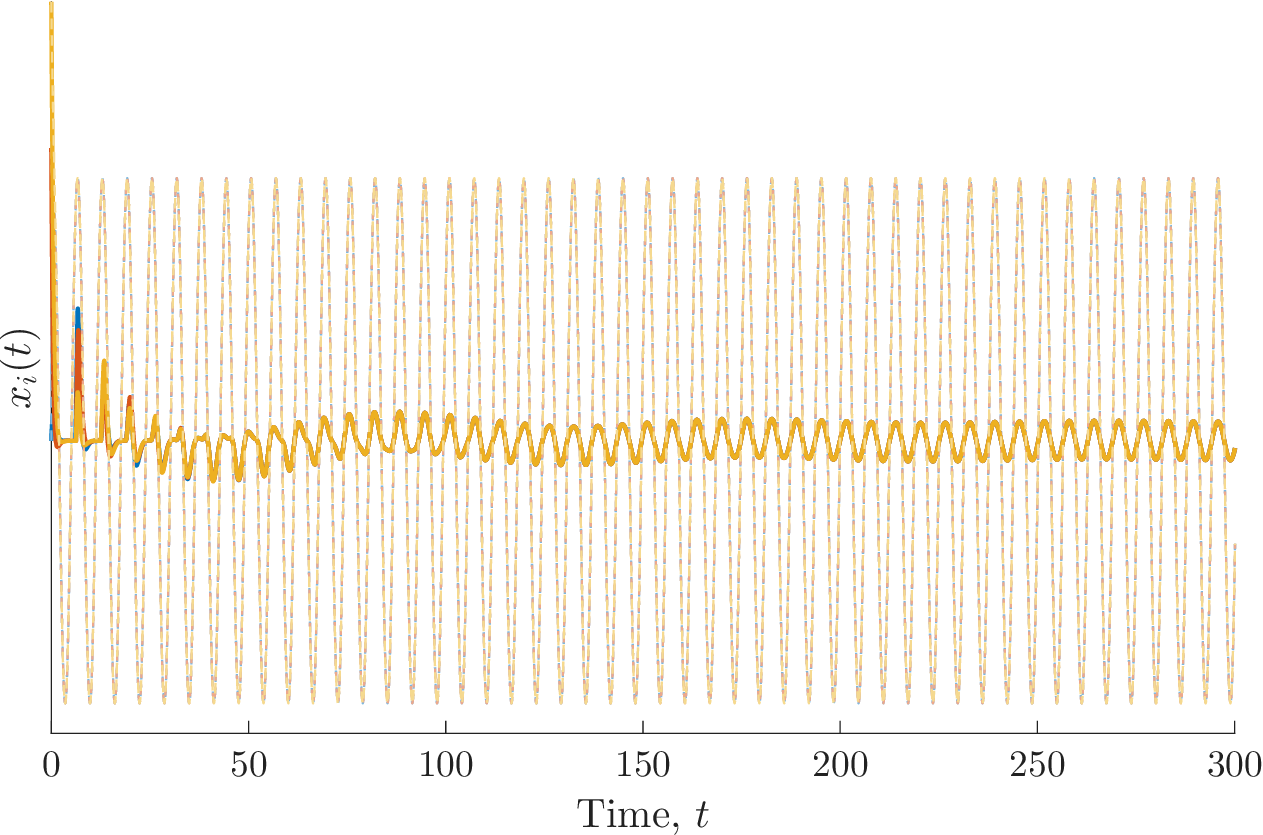}}
   %\hfill 
   % \subfigure[General transmission.]{\label{fig:pert:transge}\includegraphics[width=0.3\columnwidth,clip]{figs/Trans_gent.pdf}}
 \caption{Resulting trajectories for example \ref{sec:ex:frag} with \eqref{eq:ex1:2}. Delayed (solid) and nominal (dashed) trajectories.} \label{fig:Ex1}
 \vspace{-.2cm}
\end{figure}
It can be readily seen that indeed the agents do not asymptotically synchronize to a nontrivial trajectory for $\tau=0.1,0.9$, but do for $\tau=2\pi$. It bears to mention that the added delay changes the amplitude of the resulting trajectory, even if it does not effect the frequency.

%rem:transunstable
Finally, we illustrate the importance of the condition $\spec A \subset \J\mathbb{R}$ in Corollary \ref{cor:Uni_Trans_Del}. For simplicity assume that the some system is affected by a uniform delay of $\tau=1$, and there is some $\lambda \in \spec A$ with strictly positive real part. Applying the second condition of Corollary \ref{cor:UniTrans} for some arbitrary frequency $\omega_0$ implies that if
\[
\J\omega_0+2(1-\E^{-\J\omega_0})=\lambda \implies \lambda = 2(1-\cos(\omega_0))+\J (\omega_0+2\sin(\omega_0))
\]
holds, then the agents can synchronize to a harmonic trajectory at frequency $\omega_0$. However, since $0\leq 2(1-\cos(\omega_0)\leq 4$ this implies that the uncontrolled agents must have poles in the open right half-plane. Consider for example $\omega_0=\pi/3$, hence any matrix with poles at $\lambda=1\pm \J\left(\frac{\pi}{3}+\sqrt{3}\right)$ can potentially synchronize to a trajectory with frequency $\pi/3$. Figure \ref{fig:Ex2} illustrates exactly this setup with
\begin{equation} \label{eq:ex1:2}
  A=\mmatrix{1&\frac{\pi}{3}+\sqrt{3}\\-\left(\frac{\pi}{3}+\sqrt{3}\right) &1}, \quad \spec A =1\pm   \J\left(\frac{\pi}{3}+\sqrt{3}\right)
\end{equation}
\end{subequations}
% \[
% A=\mmatrix{1&\frac{\pi}{3}+\sqrt{3}\\-\left(\frac{\pi}{3}+\sqrt{3}\right) &1}
% \]
%
which indeed synchronize albeit after a long transient.
%width=0.9\columnwidth
\begin{figure}[!hbt]
 \centering
 \includegraphics[scale=0.75,clip]{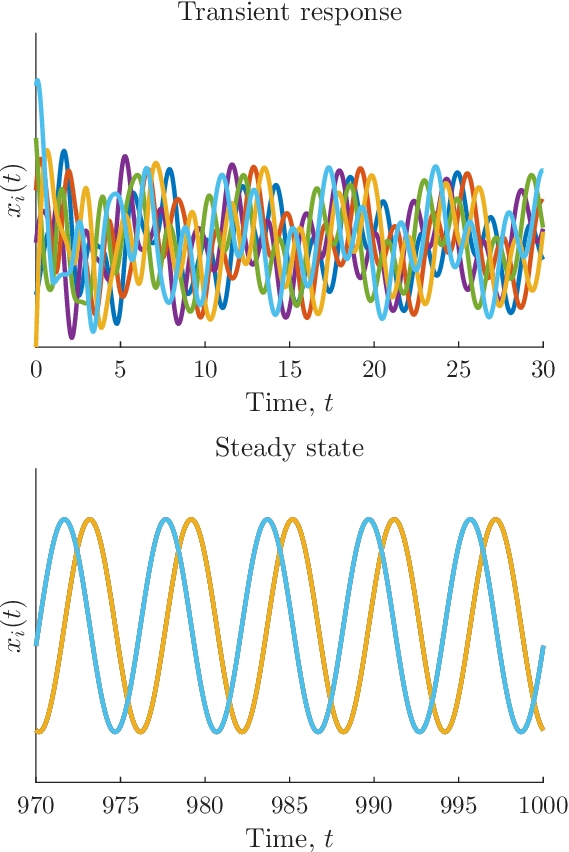}
 \caption{Resulting trajectories for example \ref{sec:ex:frag} with $\tau=1$ and \eqref{eq:ex1:2}.} \label{fig:Ex2}
 \vspace{-.2cm}
\end{figure}
%
%%%%%%%%%%%%%
\subsection{Change of agreement trajectories for $d$-regular graphs}\label{sec:ex:int}
%%%%%%%%%%%%%
%
Now we shall illustrate the second case in Corollary \ref{cor:UniTrans}. Consider simple integrator agents, i.e. $A=0$, and and let the transmission perturbation be
\begin{subequations}\label{eq:ex:trans}
    
\begin{equation}\label{eq:ex:trans:1}
   \Delta(s)=2\frac{s-1}{s+2}\implies \Lap(s)=\Deg-2\frac{s-1}{s+2} \Adj 
\end{equation}
% %
% \[
% \Delta(s)=2\frac{s-1}{s+2}\implies \Lap(s)=\Deg-2\frac{s-1}{s+2} \Adj
% \]
%
Note that $\abs{\Delta(0)}=1$ but $\measuredangle \Delta(0) =-\pi$, thus the gain condition discussed in \S~\ref{sec:main:transdyn} holds, but the phase condition does not. Consequently, the agents will not be able to preserve their original consensus trajectory when faced with this perturbation, but they might synchronize to a different trajectory if
\[
\J\omega+2(1-\Delta(\J\omega)) =0
\]
has a solution. It turns out that
\[
\J\omega+d(1-\Delta(\J\omega))=\frac{8-\omega^2}{\J\omega+2}
\]
therefore the agents can (and do) synchronize to a \emph{sine wave} with frequency $\omega=\sqrt{8}=2\sqrt{2}$. A simulation of this example is shown in Fig. \ref{fig:Ex3}, where the nominal trajectory is shown in dashed lines.
%width=0.9\columnwidth
\begin{figure}[!hbt]
 \centering
 \includegraphics[scale=0.38,clip]{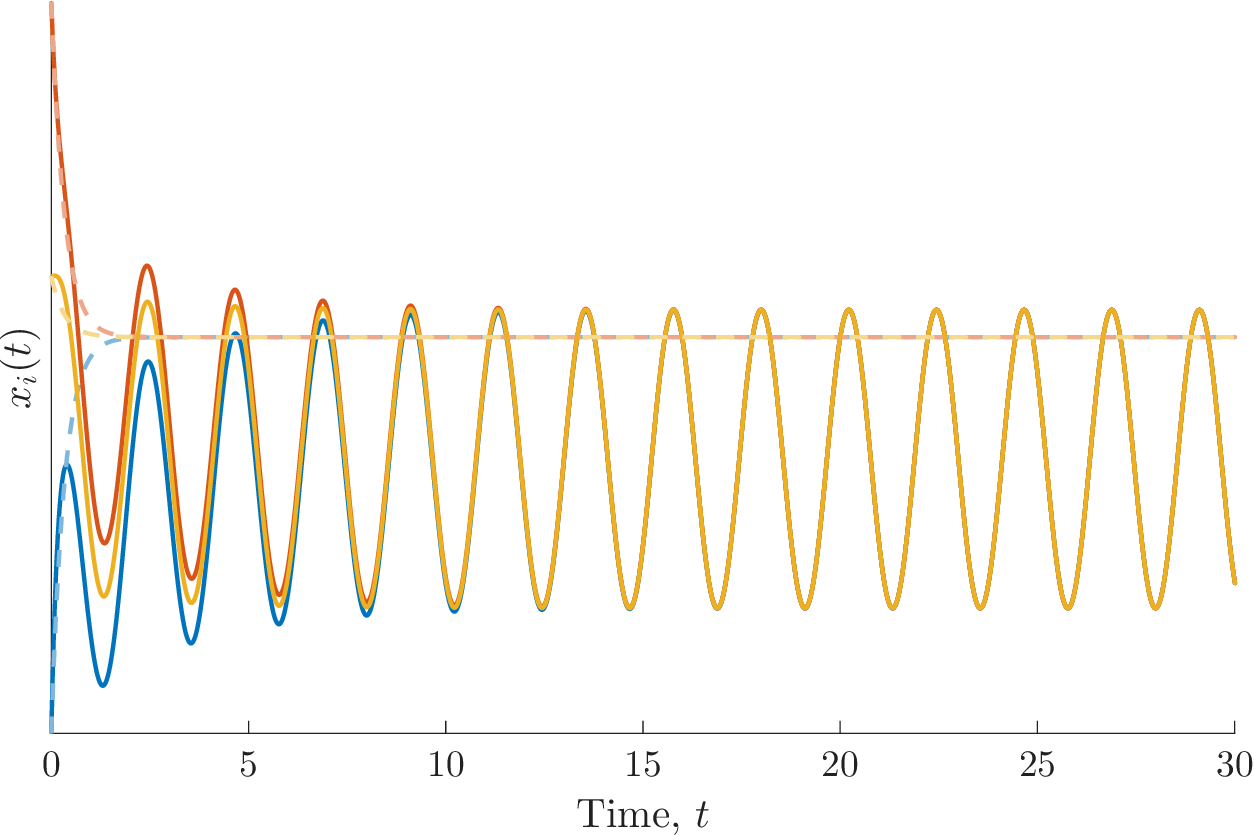}
 \caption{Resulting trajectories for example \ref{sec:ex:int} \ref{sec:ex:int} under \eqref{eq:ex:trans:1}} \label{fig:Ex3}
 \vspace{-.2cm}
\end{figure}
Note that Corollary \ref{cor:UniTrans} is not limited to marginally stable poles and in fact can cause the agreement trajectory to diverge. Consider a first-order stable perturbation
% \[
% \Delta(s)=\frac{k}{s+1},
% \]
%
\begin{equation}\label{eq:ex:trans:2}
\Delta(s)=\frac{k}{s+1} \implies \Lap(s)=\Deg-\frac{k}{s+1}\Adj
\end{equation}
solving for condition $2$ in Corollary \ref{cor:UniTrans} in terms of $k$ and $\lambda$ yields
\[
\lambda+2(1-\Delta(\lambda))=0 \implies \lambda(k)=\frac{-3+\sqrt{1+8k}}{2}
\]
which is a monotonically increasing real function in $k$. In particular $\lambda>0$ for all $k>1$, resulting in a real right half-plane pole in the agreement direction and the only possible agreement trajectories are exponentially diverging ones. An example for $k=3$ (resulting in $\lambda=1$) is shown in Fig.~\ref{fig:Ex3div}.
\begin{figure}[!hbt]
 \centering
    \includegraphics[width=0.9\columnwidth,clip]{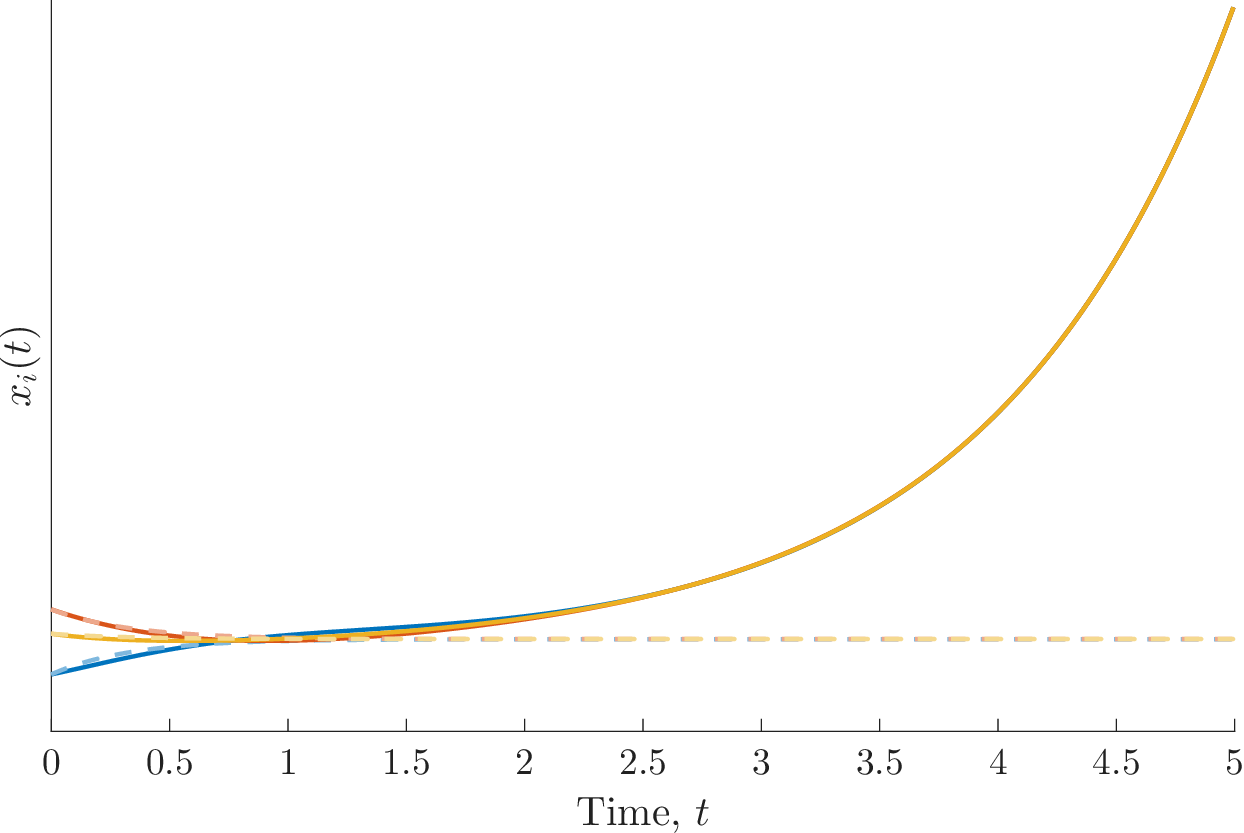}
 \caption{Resulting trajectories for example \ref{sec:ex:int} under \eqref{eq:ex:trans:2}} \label{fig:Ex3div}
 \vspace{-.2cm}
\end{figure}
%
%\todo[inline]{Completely directed graph}
Finally, we illustrate that changes in the agreement trajectory can occur with general graphs as well under the conditions of Theorem \ref{thm:generaltrans}. Consider the same group of integrator agents, not interacting over the directed graph described in Fig.~\ref{fig:Ex:g2} with the following general dynamic Laplacian
\begin{equation}\label{eq:ex:trans:3}
\Lap(s)=\mmatrix{2 &0 &0\\ 0 & 1 &0 \\ 0 &0 & 1}-\mmatrix{0 & \frac{7s+1}{s+3} &\frac{-3s+5}{s+4}\\ \frac{2s}{s+1} & 0 & 0\\ 0 & \frac{3s+1}{s+2} & 0}.
\end{equation}
It is routine to verify that $(\Adj^{(2)}-\Adj(\J))\ones=-\J\ones$ and that $\lambda+\alpha=0\in \spec A$, thus the agents can synchronize to a harmonic trajectory with frequency $1$. In contrast $(\Adj^{(2)}-\Adj(0))\ones\notin \im \ones$, thus they cannot attain consensus. This case is simulated in Fig.~\ref{fig:Ex3gen}.
%width=0.9\columnwidth
\begin{figure}[!hbt]
 \centering
    \includegraphics[scale=0.38,clip]{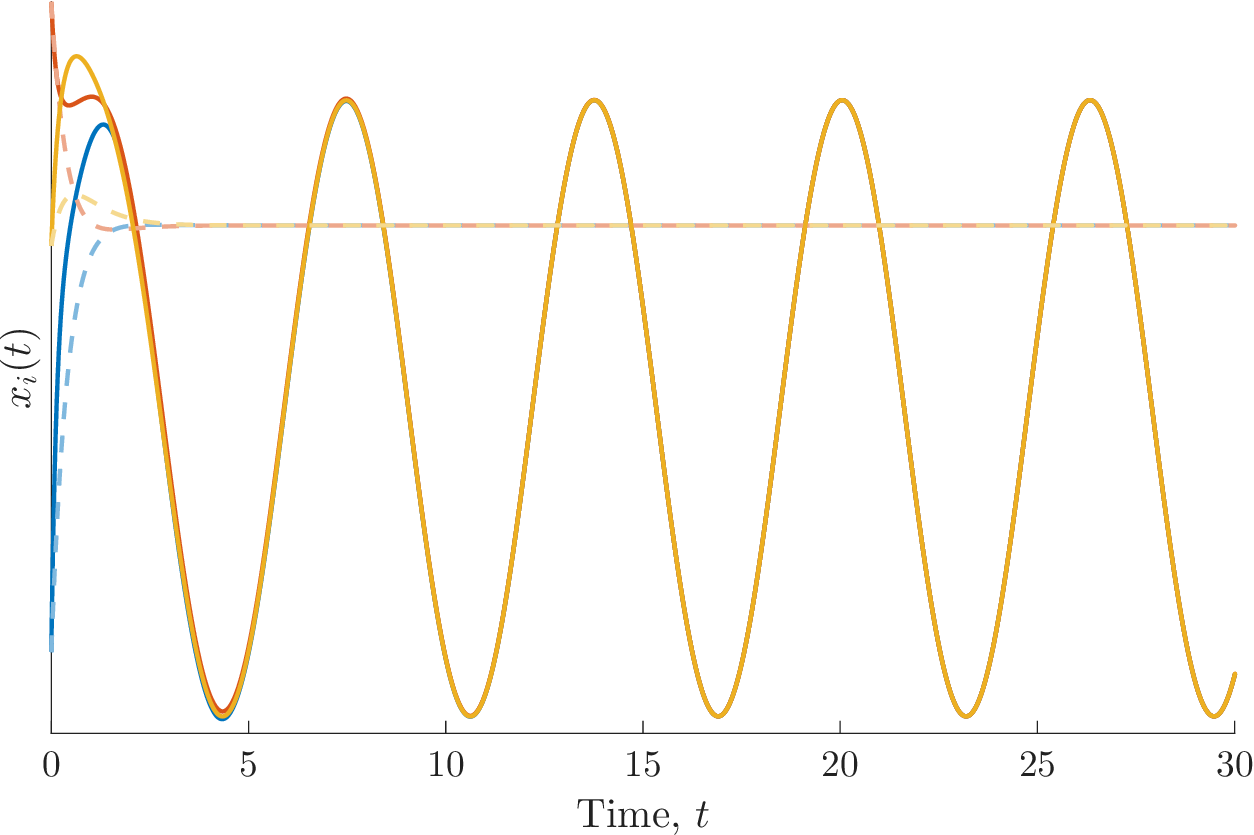}
 \caption{Resulting trajectories for example \ref{sec:ex:int} under \eqref{eq:ex:trans:3}} \label{fig:Ex3gen}
 \vspace{-.2cm}
\end{figure}
\end{subequations}
% A different example of this result would be to a
% Let $F(s)=\E^{-s}$, i.e. a uniform transmission delay with $\tau=1$. Since the example graph is regular with $d=2$, we know that is
% \[
% \J\omega_0+2(1-
% \]
% %
% \begin{figure}[!hbt]
%  \centering
%     \includegraphics[width=0.9\columnwidth,clip]{figs/Ex_unst.eps}
%  \caption{Resulting trajectories for example \ref{sec:ex:unstab}} \label{fig:Ex2}
%  \vspace{-.2cm}
% \end{figure}
% %
% %
% \GB{Consider a system with uniform transmission delay of $1$. The idea here is to pick a $d$-regular graph and design $M$ such that
% %
% \[
%  d(1-\cos(\omega_0\tau))+\J (\omega_0+d\sin(\omega_i \tau))\in \spec M
% \]
% %
% for some values of $\tau$ and $\omega_0$. The point is that $M$ will be unstable (and won't work without the delay), but will synchronize with an appropriate delay.
% }
%
%%%%%%%%%%%%%%%%%%%%%%%%%%%%
\section{Concluding remarks}\label{sec:conc}
%%%%%%%%%%%
In this paper, we introduced a rigorous frequency-domain framework to evaluate the robustness and fragility of synchronization trajectories in LTI multi-agent systems subjected to additional network dynamics. By leveraging coprime factorizations over $\Hinf$, we established a novel Laplace-domain criterion to characterize the specific closed-loop poles that govern the agreement manifold. When restricted to a diffusively coupled system with homogeneous agents, our results simplify further, yielding a clear characterization of the possible agreement trajectories via the image of the dynamic Laplacian. We further showed that the nominal trajectory can be preserved only if the null space of the dynamic Laplacian is preserved when evaluated at the unstable closed-loop poles. Otherwise, additional network dynamics, such as perturbations, can fundamentally alter the asymptotic trajectory.

Motivated by this realization, we analyzed two classes of common dynamic structures: structure-preserving and transmission-only dynamics. As the name suggests, the first class does not alter the null space of the Laplacian, rendering all possible trajectories robust. In contrast, we demonstrated that synchronization trajectories are extremely fragile to transmission-only perturbations under standard LTI cooperative output regulation schemes. In particular, we proved that while static consensus exhibits unique robustness to heterogeneous transmission delays, synchronization to periodic trajectories is destroyed by arbitrarily small transmission dead-time. Furthermore, we showed that for d-regular topologies, uniform transmission perturbations can actively shift the system to synchronize at an entirely unexpected frequency.

Ultimately, these results expose previously unidentified vulnerabilities in classical robust synchronization, demonstrating that transmission dynamics necessitate fundamental structural modifications to networked reference generators. Potential modifications inspired by these findings were briefly discussed in the context of sampled-data systems and are the focus of our ongoing work.
%%%%%%%%%%%%%%%%%%%%%%%%%%%%
%\appendix
\appendices
% \gdef\thesection{\Alph{section}}
% \makeatletter
% \renewcommand\@seccntformat[1]{\appendixname\csname the#1\endcsname.\hspace{0.5em}}
% \makeatother
%%%%%%%%%%%%

\section{Coprime factorizations over $\boldsymbol{\Hinf}$} \label{sec:coprf}

In this Appendix, basic coprime factorization results that are required in the paper are presented. A comprehensive exposition of the subject can be found in \cite{V:85} and extensions to infinite-dimensional systems can be found in \cite{CZw:95}.

Functions $M\in\Hinf^{m\times m}$ and $N\in\Hinf^{p\times m}$ are said to be \emph{right coprime} if there are $X\in\Hinf^{m\times m}$ and $Y\in\Hinf^{m\times p}$ (Bézout coefficients) such that
\begin{subequations} \label{eq:BezE}
\begin{equation} \label{eq:BezE:r}
 XM+YN=I_m.
\end{equation}
Functions $\Mtil\in\Hinf^{p\times p}$ and $\Ntil\in\Hinf^{p\times m}$ are said to be \emph{left coprime} if there are $\Xtil\in\Hinf^{p\times p}$ and $\Ytil\in\Hinf^{m\times p}$
such that
\begin{equation} \label{eq:BezE:l}
\Mtil \Xtil+\Ntil \Ytil=I_p.
\end{equation}
\end{subequations}
A transfer function $G(s)$ is said to have coprime factorizations over \Hinf\ if there are right coprime $M_G,N_G\in\Hinf$ and left coprime $\Mtil_G,\Ntil_G\in\Hinf$, known as right and left coprime factors of $G$, respectively, such that
\begin{equation} \label{eq:Gcf}
 G=N_GM_G^{-1}=\Mtil_G^{-1}\Ntil_G.
\end{equation}
Coprime factors are unique up to post- or pre-multiplication by bi-stable transfer functions for right and left factors, respectively.

\begin{lemma} \label{lem:GstM}%
 If\/ $G(s)$ has coprime factorizations, then
 \[
  G\in\Hinf\iff M_G^{-1}\in\Hinf\iff\tilde M_G^{-1}\in\Hinf.
 \]
\end{lemma}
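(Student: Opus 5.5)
The plan is to prove the three equivalences from the Bézout identities together with the factorization relations \eqref{eq:Gcf}, using that $\Hinf$ is closed under sums and products. I will do the right-factor chain $G\in\Hinf\iff M_G^{-1}\in\Hinf$ in detail; the left-factor statement follows by the symmetric argument with \eqref{eq:BezE:l}. Combining the two gives the full chain.

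First, suppose $G\in\Hinf$. Right-multiply the right Bézout identity \eqref{eq:BezE:r} by $M_G^{-1}$. This gives
\[
X+YN_GM_G^{-1}=M_G^{-1}, \quad\text{i.e.}\quad M_G^{-1}=X+YG.
\]
Since $X,Y,G\in\Hinf$, it follows that $M_G^{-1}\in\Hinf$. Conversely, if $M_G^{-1}\in\Hinf$, then $G=N_GM_G^{-1}$ is a product of two $\Hinf$ functions, so $G\in\Hinf$.

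Symmetrically, left-multiply \eqref{eq:BezE:l} by $\Mtil_G^{-1}$. This yields
\[
\Mtil_G^{-1}=\Xtil+G\Ytil,
\]
so $G\in\Hinf\implies \Mtil_G^{-1}\in\Hinf$. Conversely, $G=\Mtil_G^{-1}\Ntil_G\in\Hinf$ whenever $\Mtil_G^{-1}\in\Hinf$. Hence both $M_G^{-1}\in\Hinf$ and $\Mtil_G^{-1}\in\Hinf$ are equivalent to $G\in\Hinf$, and therefore to each other.

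The only delicate point is a technical one: the manipulations must be legitimate identities of functions on $\RHP$. This requires $M_G$ and $\Mtil_G$ to be invertible as matrix functions, that is, to have nonzero determinant generically. I will take this as part of the definition of a coprime factorization, since $G=N_GM_G^{-1}$ presupposes it. The Bézout-derived identities then hold at all points where the determinant is nonzero. Because the right-hand sides are analytic and bounded on $\RHP$, the inverses extend to bounded analytic functions. The isolated zeros of the determinant are therefore removable and cannot actually occur, which justifies the identities on all of $\RHP$.
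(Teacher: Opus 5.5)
Your proof is correct and follows the paper's argument. One direction is immediate from $G=N_GM_G^{-1}$ (resp.\ $G=\Mtil_G^{-1}\Ntil_G$), and the other comes from rewriting the Bézout identities as $M_G^{-1}=X+YG$ and $\Mtil_G^{-1}=\Xtil+G\Ytil$; your added remark that $\det M_G$ cannot vanish in $\RHP$ is a sound justification the paper leaves implicit.
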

\begin{IEEEproof}
 The ``if'' part of the first equivalence relation is immediate from \eqref{eq:Gcf}.
 Its ``only if'' part follows from rewriting the Bézout equality \eqref{eq:BezE:r} as
 $M_G^{-1}=X_G+Y_GG$. The second relation follows by similar arguments. 
\end{IEEEproof}

\begin{lemma}\label{lem:coprime}
     Let $G(s)$ have coprime factorizations, then all possible factors satisfy the following properties.
     \begin{enumerate}
         \item If $\lambda \in \CRHP$ is a pole of $G$ then $\Mtil (\lambda)$ and $M(\lambda)$ are singular.
         \item If $G$ is square and with full normal rank and $\lambda \in \CRHP$ is a pole of $G$ but not a zero then $\Ntil(\lambda)$ is regular and $\ker \Mtil(\lambda)=\ker G^{-1}(\lambda)$.
     \end{enumerate}
\end{lemma}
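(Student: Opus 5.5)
For item 1, I would argue by contradiction using the Bézout identities \eqref{eq:BezE}. Suppose $\lambda\in\CRHP$ is a pole of $G$ but $\Mtil(\lambda)$ is nonsingular. Since $\Mtil\in\Hinf$ is analytic in a neighbourhood of $\lambda$ (or, when $\lambda$ is on the imaginary axis, is given by the boundary extension fixed in the preliminaries), $\Mtil^{-1}$ is analytic near $\lambda$. Then $G=\Mtil^{-1}\Ntil$ is a product of functions analytic near $\lambda$, with $\Ntil\in\Hinf$, so $G$ has no pole at $\lambda$, a contradiction. The right factor is handled symmetrically through $G=N_GM_G^{-1}$. The Bézout identity itself is not needed for this direction; it only guarantees that $\Mtil$ and $\Ntil$ have no common unstable zero.

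For item 2, I would first show that $\Ntil(\lambda)$ is regular. Because $G$ is square with full normal rank, $\Ntil$ is square and $\Ntil=\Mtil G$. Suppose $\Ntil(\lambda)\xi=0$ for some $\xi\neq0$. Since $\lambda$ is not a zero of $G$, the inverse $G^{-1}$ is analytic at $\lambda$, and since $\lambda$ is a pole of $G$, we have $G^{-1}(\lambda)$ singular. Write $\Mtil=\Ntil G^{-1}$, which is valid as an identity of meromorphic functions. The Bézout identity \eqref{eq:BezE:l} can then be rewritten as
\[
\Ntil\,(G^{-1}\Xtil+\Ytil)=I,
\]
where every factor is analytic near $\lambda$. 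Evaluating at $\lambda$ gives $\Ntil(\lambda)\bigl(G^{-1}(\lambda)\Xtil(\lambda)+\Ytil(\lambda)\bigr)=I$, so $\Ntil(\lambda)$ has a right inverse. Since it is square, it is invertible. This step is the crux: it is exactly where both coprimeness and the ``not a zero'' hypothesis enter.

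With $\Ntil(\lambda)$ invertible and $\Mtil(\lambda)=\Ntil(\lambda)G^{-1}(\lambda)$, which follows by continuity from the analytic identity above, we get $\ker\Mtil(\lambda)=\ker G^{-1}(\lambda)$ directly, since left multiplication by an invertible matrix does not change the kernel.

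The main obstacle I anticipate is rigor at points $\lambda$ on the imaginary axis. There, $\Hinf$ functions are only boundary limits, so "analytic near $\lambda$" must be replaced by continuity of the extension. I would handle this by assuming, as the paper implicitly does, that the factors extend analytically (or at least continuously) across $\lambda$, so that evaluation commutes with the products. This matches the paper's convention of writing $G(s)$ for $\tilde G(s)$.
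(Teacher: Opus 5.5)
Your item~1 is the paper's argument essentially verbatim, but for item~2 you take a genuinely different route to the regularity of $\Ntil(\lambda)$. The paper simply invokes \cite[Lemma 7.2.21]{CZw:95}, which states that for $\lambda\in\CRHP$, $\Ntil(\lambda)$ loses rank iff $\lambda$ is a zero of $G$. You instead derive the needed direction directly: substituting $\Mtil=\Ntil G^{-1}$ into the left Bézout identity gives $\Ntil(G^{-1}\Xtil+\Ytil)=I$, and you evaluate this at $\lambda$. That step is legitimate under the paper's own convention, used in the proof of Theorem~\ref{thm:dynLap}, that the zeros of $G$ are the singularities of $G^{-1}$.

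Your version is self-contained. It makes explicit that coprimeness plus the ``not a zero'' hypothesis is exactly what is used, and it shows that the pole hypothesis plays no role in this part. The citation buys brevity and also supplies the converse direction, which the lemma does not need. The final kernel step is identical in both proofs. Both also rely on continuity of the boundary extension at imaginary-axis points, and your explicit flagging of that caveat is more careful than the paper.

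One small cleanup: your opening supposition ``$\Ntil(\lambda)\xi=0$ for some $\xi\neq0$'' is never used, so you can drop it.
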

\begin{IEEEproof}
For the first property, assume by contradiction that $\Mtil(\lambda)$ is not singular. Since $\Ntil \in \Hinf$ this implies that $\Mtil^{-1}(\lambda)\Ntil(\lambda)=G(\lambda)$ is bounded and thus $\lambda$ cannot be a pole of $G$. This does not depend on the particular factorization, and the proof for $M$ is identical. For the second, note that since $G$ is square and with full normal rank by definition so is $\Ntil$, and $G^{-1}$ exists for almost all $s$.  By \cite[Lemma 7.2.21]{CZw:95}, for $\lambda\in \CRHP$, $\Ntil(\lambda)$ loses rank if and only if $\lambda$ is a zero of $G$. Since $\Ntil$ is square and $\lambda$ is by assumption not a zero of $G$, $\ker \Ntil(\lambda)$ is trivial. The las points follows immediately by rewriting $\Mtil=\Ntil G^{-1}$.
\end{IEEEproof}
%%%%%%%%%%%%%%%%%%%%%%%%%%%%
%%%%%%%%%%%%%%%%%%%%%%%%%%%%
%\section*{References}
%%%%%%%%%%%%%%%%%%%%%%%%%%%%
%\vspace{-0.50cm}

\bibliographystyle{IEEEtran}
%\bibliography{papers,books}
\bibliography{MyBib}

@preamble{"\newcommand{\noopsort}[1]{}"}

@string{ACC = "American Control Conf."}

@string{Aut = "Automatica"}

@string{ECC = "European Control Conf."}

@string{IEEE-TAC = "IEEE Trans.\ Automat.\ Control"}

@string{IEEE-TCNS = "IEEE Trans.\ Control Netw.\ Syst."}

@string{IEEE-CDC = "IEEE Conf.\ Decision and Control"}

@string{MCSS = "Math.\ Control, Signals and Systems"}

@string{PrIEEE = "Proc.\ IEEE"}

@string{SIAM-CO = "SIAM J.\ Control Optim."}

@string{IEEE-TSMC = "{IEEE} Trans.\ Syst.\, Man, and Cybernetics: Systems"}

@string{SV    = "Springer-Verlag"}

@string{MIT  = "The MIT Press"}

@string{PUP  = "Princeton University Press"}

@string{SV   = "Springer-Verlag"}

@article{BenRejeb2018GuaranteedCost,
  author  = {Ben Rejeb, Jihene and Mor{\u a}rescu, Irinel-Constantin and Daafouz, Jamal},
  title   = {Control Design with Guaranteed Cost for Synchronization in Networks of Linear Singularly Perturbed Systems},
  journal = {Automatica},
  volume  = {91},
  pages   = {89--97},
  year    = {2018},
  doi     = {10.1016/j.automatica.2018.01.019}
}

@book{ME:10,
  author = "M. Mesbahi and M. Egerstedt",
  title = "Graph Theoretic Methods in Multiagent Networks",
  publisher = PUP,
  address = "Princeton",
  year = "2010"
}

@article{FM:04,
  author = "J. A. Fax and R. M. Murray",
  title = "Information Flow and Cooperative Control of Vehicle Formations",
  journal = IEEE-TAC,
  volume = "49",
  year = "2004",
  number = "9",
  pages = "1465--1476"
}

@ARTICLE{M:05,
  author={L. {Moreau}},
  journal=IEEE-TAC, 
  title={Stability of multiagent systems with time-dependent communication links}, 
  year={2005},
  volume={50},
  number={2},
  pages={169-182}
}

@article{O-SFM:07,
  author = "R. Olfati-Saber and A. Fax and R. M. Murray",
  title = "Consensus and Cooperation in Networked Multi-Agent Systems",
  journal = PrIEEE,
  volume = "95",
  year = "2007",
  number = "1",
  pages = "215--233"
}

@book{CZw:95,
  author = "R. F. Curtain and H. Zwart",
  title = "An Introduction to Infinite-Dimensional Linear Systems Theory",
  publisher = SV,
  address = "New York, NY",
  year = "1995"
}

@ARTICLE{LDCH:10,
  author={Li, Zhongkui and Duan, Zhisheng and Chen, Guanrong and Huang, Lin},
  journal={IEEE Transactions on Circuits and Systems I: Regular Papers}, 
  title={Consensus of Multiagent Systems and Synchronization of Complex Networks: A Unified Viewpoint}, 
  year={2010},
  volume={57},
  number={1},
  pages={213-224},
}

@article{Sm:89,
  author = "M. C. Smith",
  year = "1989",
  title = "On stabilization and the existence of coprime factorizations",
  journal = IEEE-TAC,
  volume = "34",
  number = "9",
  pages = "1005--1007"
}

@article{WSA:11,
  author = "P. Wieland and R. Sepulchre and F. Allg{\"o}wer",
  title = "An internal model principle is necessary and sufficient for linear output synchronization",
  journal = Aut,
  volume = "47",
  year = "2011",
  number = "5",
  pages = "1068--1074"
}

@article{GSm:93,
  author = "T. T. Georgiou and M. C. Smith",
  title = "Graphs, causality and stabilizability: linear, shift-invariant systems on {$\mathcal{L}_2[0,\infty)$}",
  journal = MCSS,
  volume = "6",
  year = "1993",
  pages = "195--223"
}

@book{V:85,
  author = "M. Vidyasagar",
  title = "Control System Synthesis: A Factorization Approach",
  publisher = "The MIT Press",
  address = "Cambridge, MA",
  year = "1985"
}

@article{SS:09,
  author = "L. Scardovi and R. Sepulchre",
  title = "Synchronization in networks of identical linear systems",
  journal = Aut,
  volume = "45",
  year = "2009",
  number = "11",
  pages = "2557--2562"
}

@article{IMC:14,
  author = "A. Isidori and L. Marconi and G. Casadei",
  title = "Robust Output Synchronization of a Network of Heterogeneous Nonlinear Agents Via Nonlinear Regulation Theory",
  journal = IEEE-TAC,
  volume = "59",
  year = "2014",
  number = "10",
  pages = "2680--2691"
}

@article{BMZ:23,
title = {On the internal stability of diffusively coupled multi-agent systems and the dangers of cancel culture},
journal = {Automatica},
volume = {155},
pages = {111158},
year = {2023},
author = {G. Barkai and L. Mirkin and D. Zelazo},
}

@INPROCEEDINGS{BMZ:23cdc,
  author="G. Barkai and L. Mirkin and D. Zelazo",
  booktitle="Proc.\ 62nd "#IEEE-CDC,
  title={An emulation approach to sampled-data synchronization}, 
  year={2023},
  address = "Singapore",
  pages={6449-6454},
}

@INPROCEEDINGS{BMZ:24cdc,
title = "Asynchronous sampled-data synchronization with small communications delays",
author = {G. Barkai and L. Mirkin and D. Zelazo},
booktitle="Proc.\ 63rd IEEE Conf.\ Decision and Control.", 
 year={2024},
}

@INPROCEEDINGS{BMZ:24ECC,
  author={G. Barkai and L. Mirkin and D. Zelazo},
  booktitle="Proc.\ 22nd European Control Conf.", 
  title={An Emulation Approach to Output-Feedback Sampled-Data Synchronization}, 
  year={2024},
 }

@article{FrW:76,
  author = "B. A. Francis and W. M. Wonham",
  title = "The internal model principle of control theory",
  journal = Aut,
  volume = "12",
  year = "1976",
  number = "5",
  pages = "457--465"
}

@INPROCEEDINGS{BMZ:25,
  author="G. Barkai and L. Mirkin and D. Zelazo",
  booktitle="Proc.\ 24th "#ECC,
  title={On two-degrees-of-freedom agreement protocols}, 
  year={2026},
  address = "Reykjavík, Iceland",
  pages={1492-1497},
}

@article{At:14,
author = {F.M. Atay},
title = {The consensus problem in networks with transmission delays},
journal = {Philosophical Transactions of the Royal Society A: Mathematical, Physical and Engineering Sciences},
volume = {371},
number = {1999},
pages = {20120460},
year = {2013},
}

@article{O-SM:04,
  author = "R. Olfati-Saber and R. M. Murray",
  title = "Consensus Problems in Networks of Agents With Switching Topology and Time-Delays",
  journal = IEEE-TAC,
  volume = "49",
  year = "2004",
  number = "9",
  pages = "1520--1533"
}

@INPROCEEDINGS{SDJ:08,
  author={A. Seuret and D.V. Dimarogonas and K.H. Johansson},
  booktitle="Proc.\ 47th "#IEEE-CDC,
  title={Consensus under communication delays}, 
  year={2008},
  pages={4922-4927},
  }

@article{MPA:10,
title = {Delay robustness in consensus problems},
journal = Aut,
volume = {46},
number = {8},
pages = {1252-1265},
year = {2010},
author = {U. Münz and A. Papachristodoulou and F. Allgöwer},
}

@article{MMN:09,
author = {W. Michiels and C-I. Mor\u{a}rescu and S-I. Niculescu},
title = {Consensus Problems with Distributed Delays, with Application to Traffic Flow Models},
journal = SIAM-CO,
volume = {48},
number = {1},
pages = {77-101},
year = {2009},
}

@INPROCEEDINGS{GM:04acc,
  author={A. Gattami and R. Murray},
  booktitle="Proc.\ 2004 "#ACC,
  title={A frequency domain condition for stability of interconnected MIMO systems}, 
  year={2004},
  volume={4},
  pages={3723-3728 vol.4},
 }

@ARTICLE{HT:13,
  author={J.M. Hendrickx and J.N. Tsitsiklis},
  journal=IEEE-TAC, 
  title={Convergence of Type-Symmetric and Cut-Balanced Consensus Seeking Systems}, 
  year={2013},
  volume={58},
  number={1},
  pages={214-218},
  }

@ARTICLE{LC:17,
  author={Z. Li and J. Chen},
  journal=IEEE-TAC,
  title={Robust Consensus of Linear Feedback Protocols Over Uncertain Network Graphs}, 
  year={2017},
  volume={62},
  number={8},
  pages={4251-4258},
  }

@ARTICLE{TTM:13,
  author={H.L. Trentelman and K. Takaba and N. Monshizadeh},
  journal=IEEE-TAC, 
  title={Robust Synchronization of Uncertain Linear Multi-Agent Systems}, 
  year={2013},
  volume={58},
  number={6},
  pages={1511-1523},
  }

@ARTICLE{QGY:19,
  author={W. Qian and Y. Gao and Y. Yang},
  journal=IEEE-TSMC, 
  title={Global Consensus of Multiagent Systems With Internal Delays and Communication Delays}, 
  year={2019},
  volume={49},
  number={10},
  pages={1961-1970},
 }

@ARTICLE{Fr:77,
  author={B. Francis},
  journal=IEEE-TAC,
  title={The multivariable servomechanism problem from the input-output viewpoint}, 
  year={1977},
  volume={22},
  number={3},
  pages={322-328},
 }

@ARTICLE{ZB:17,
  author={D. Zelazo and M. Bürger},
  journal=IEEE-TCNS,
  title={On the Robustness of Uncertain Consensus Networks}, 
  year={2017},
  volume={4},
  number={2},
  pages={170-178},
}

@article{DG:75,
title = {Robust control of a general servomechanism problem: The servo compensator},
journal = Aut,
volume = {11},
number = {5},
pages = {461-471},
year = {1975},
author = {E.J. Davison and A. Goldenberg},
}

@article{Lee:16,
author = {Lee, Dongjun},
title = {Robust consensus of linear systems on directed graph with non-uniform delay},
journal = {IET Control Theory \& Applications},
volume = {10},
number = {18},
pages = {2574-2579},
year = {2016}
}

@article{LL:17unk,
author = {M. Lu and L. Liu},
title = {Consensus of linear multi-agent systems subject to communication delays and switching networks},
journal = {International Journal of Robust and Nonlinear Control},
volume = {27},
number = {9},
pages = {1379-1396},
year = {2017}
}

@ARTICLE{LL:17know,
  author={M. Lu and L. Liu},
  journal=IEEE-TAC, 
  title={Distributed Feedforward Approach to Cooperative Output Regulation Subject to Communication Delays and Switching Networks}, 
  year={2017},
  volume={62},
  number={4},
  pages={1999-2005},
 }

\end{document}